\renewcommand*{\ge}{\geqslant}
\renewcommand*{\le}{\leqslant}
\newcommand{\bbR}{\mathbb{R}}
\newcommand{\poly}{\mathrm{poly}}
\newcommand{\calP}{\mathcal{P}}
\newcommand{\calT}{\mathcal{T}}
\newcommand{\calU}{\mathcal{U}}
\newcommand{\calA}{\mathcal{A}}
\newcommand{\MMS}{\mathrm{mms}}
\theoremstyle{plain}
	  \newtheorem{theorem}{Theorem}[section]
	  \newtheorem{lemma}[theorem]{Lemma}
	  \newtheorem{prop}[theorem]{Proposition}
\theoremstyle{definition}
	  \newtheorem{define}[theorem]{Definition}
	  \newtheorem{example}[theorem]{Example}
\theoremstyle{remark}
\numberwithin{equation}{section}
\title{Fair Division of a Graph}
\author{
Sylvain Bouveret \\ LIG - Grenoble INP, France \\ sylvain.bouveret@imag.fr \And
Katar\'ina Cechl\'arov\'a \\ P.J. \v{S}af\'arik University, Slovakia \\ katarina.cechlarova@upjs.sk \AND
Edith Elkind \\ University of Oxford, UK \\ elkind@cs.ox.ac.uk \And
Ayumi Igarashi \\ University of Oxford, UK \\ ayumi.igarashi@cs.ox.ac.uk \And
Dominik Peters \\ University of Oxford, UK \\ dominik.peters@cs.ox.ac.uk
}
\begin{document}

\maketitle

\begin{abstract}
We consider fair allocation of indivisible items under an additional constraint:
there is an undirected graph describing the relationship between the items, and each agent's 
share must form a connected subgraph of this graph. This framework captures, e.g., fair 
allocation of land plots, where the graph describes the accessibility relation among the plots.
We focus on agents that have additive utilities for the items,
and consider several common fair division solution concepts, such as proportionality, envy-freeness
and maximin share guarantee. While finding good allocations according to these solution concepts
is computationally hard in general, we design efficient algorithms for special cases where
the underlying graph has simple structure, and/or the number of agents---or, less restrictively, 
the number of agent types---is small. In particular, despite non-existence results in the general case, we prove that for acyclic graphs 
a maximin share allocation always exists and can be found efficiently.
\end{abstract}

\section{Introduction}
The department of computer science at University X is about to move to a new building.
Each research group has preferences over rooms, but it would also be desirable
for each group to have a contiguous set of offices, to facilitate communication.
This situation can be seen as a problem of fair division (where agents are research groups 
and items are offices) with an additional connectivity requirement. This constraint could be captured by an undirected graph whose vertices are rooms (items) and there
is an edge between two vertices if the respective rooms are adjacent; each agent should 
obtain a connected piece of this graph.

In this paper, we introduce and study a formal model for such scenarios.
Specifically, we consider the problem of fair allocation of indivisible items 
in settings where there is a graph capturing the dependency relation between items, 
and each agent's share has to be connected in this graph. Besides the example in the first paragraph, 
our model captures a variety of applications, such as time-sharing 
a processor where tasks can be switched only at pre-defined times,
allocating a set of indivisible land plots, or assigning administrative duties
to members of an academic department, where there are dependencies among tasks
(e.g., dealing with incoming foreign students has some overlap with preparing 
study programmes in foreign languages, but not with fire safety).

\smallskip

\noindent{\bf Our contribution\ }
We propose a framework for fair division under connectivity constraints, and 
investigate the complexity of finding good allocations in this framework 
according to three well-studied solution concepts: proportionality, envy-freeness
(in conjunction with completeness), and maximin share guarantee.
We focus on additive utility functions.

For proportionality and envy-freeness, we obtain hardness results even for very simple graphs:
finding proportional allocations turns out to be 
NP-hard even for paths, and finding complete envy-free allocations
is NP-hard both for paths and for stars. Nevertheless, we also obtain some positive
results for these solution concepts. In particular, both proportional
and complete envy-free allocations can be found efficiently
when the graph is a path and
agents can be classified into a small number of {\em types},
where agents are said to have the same type
when they have the same preferences over items.\footnote{The same parameter was used by \citet{Branzei2016} to obtain 
results
for maximizing social welfare;
similar ideas have been used in the context of coalition formation \citep{shrot,azizdekeijzer}.}
If we assume that not just the number of player types, but the actual number of players
is small, we obtain an efficient algorithm for finding proportional allocations on arbitrary trees.

Recently, several papers have studied the concept of the \emph{maximin share guarantee} (MMS) \citep{budish}, 
which captures a desirable property of allocations that is easy to achieve for divisible items 
via cut-and-choose protocols. For indivisible goods, such allocations 
need not exist \citep{Procaccia14,KurokawaPW16}.
We prove a strong positive result for our setting: 
an MMS allocation always exists if the
underlying graph is a tree, and 
can be computed efficiently. 
Our algorithm is an adaptation of the classic last-diminisher procedure for the divisible case.
In contrast, we provide an example where the
underlying graph is a cycle of length 8 and there is no MMS
allocation. We believe that these results are useful for developing an
intuitive understanding of the concept of MMS; in particular, our example for the
cycle is much simpler than known examples of instances with no MMS
allocation in the absence of graph constraints.

\smallskip

\noindent{\bf Related work\ }
Fair allocation of indivisible items has received a considerable amount of attention 
in the (computational) social choice literature;
we refer the reader to a survey by \cite{bouveret-chap}. 
However, ours is the first attempt
to impose a graph-based constraint on players' bundles. In contrast, in the context of fair allocation
of {\em divisible} items (also known as cake-cutting) contiguity is a well-studied requirement.
For instance, \cite{Stromquist80} showed that  an envy-free division in which each player
receives a single contiguous piece always exists, but it cannot be obtained by a finite algorithm, 
even for three players \citep{Stromquist}. These results extend to equitable division 
with contiguous pieces \citep{CechlarovaDobosPillarova,AumannDomb2010,CechlarovaPillarova2012b}.
\cite{Bei12} consider fair allocations with contiguous pieces that approximately
maximize social welfare; \cite{AumannDH13} investigate a variant of this question 
without fairness constraints.


\cite{ConitzerDS04} analyze a combinatorial auction setting that is somewhat similar to ours:
in their model, too, there is an undirected graph describing connections between items, 
and each agent's bid is connected with respect to this graph. They provide an algorithm for finding
an allocation that maximizes the social welfare and is in FPT with respect to the treewidth
of the item graph. \cite{AumannDH15} consider auctioning of a time interval, and obtain results both
for the case of pre-determined time slots (which corresponds to the model of \cite{ConitzerDS04}, 
with the item graph being a line) and for the case where the interval can be cut into arbitrary slots
(which is similar in spirit to cake-cutting). However, neither paper considers any fairness constraints.

Two very recent papers, like ours, combine graphs and fair division. 
\cite{chev17} consider the setting where agents are located in vertices of a graph.
Each agent has an initial endowment of goods and can trade with her neighbors in the graph. 
The authors ask what outcomes can be achieved by a sequence of mutually beneficial deals.
In the work of \cite{abebe}, the graph describes a visibility relation: agents
are located in vertices and an agent can only envy agents who are adjacent to her.
In contrast, in our model graphs represent the relationship between items rather than agents. 


\section{Our Model}
We study fair allocation of indivisible goods where each allocated bundle is connected
in an underlying graph.
\begin{define}
An instance of the {\em connected fair division problem (CFD)} is a triple $I=(G,N,\calU)$ where
\begin{itemize}
\item $G=(V,E)$ is an undirected graph,
\item $N=\{1,\dots,n\}$ is a set of {\em players}, or {\em agents},
\item $\calU$ is an $n$-tuple of utility functions 
$u_i: V \rightarrow \bbR_{\ge 0}$, where $\sum_{v \in V}u_i(v)=1$ for each $i \in N$.
\end{itemize}
We refer to elements of $V$ as {\em items}, and denote the number of items by $m$.
\end{define}
Note that when $G$ is a clique, CFD is equivalent to the classic problem of fair allocation
with indivisible items.

For each $X \subseteq V$, we set $u_i(X)=\sum_{v \in X}u_i(v)$, so valuations in this paper are always additive. Two players $i,j \in N$ are of the 
{\em same type} if $u_i(v)=u_j(v)$ for all $v \in V$. We denote the number of player types
in a given instance by $p$. 

An \emph{allocation} is a function $\pi:N \rightarrow 2^V$ assigning each player a bundle of items. An allocation 
$\pi$ is {\em valid} if for each player $i \in N$ the bundle $\pi(i)$ is connected in $G$ and no item is allocated twice, so that
$\pi(i) \cap \pi(j) =\emptyset$ for each pair of distinct players $i,j \in N$.
We say that a valid allocation $\pi$ is 
\begin{itemize}
\item
{\em proportional} if $u_i(\pi(i)) \ge \frac{1}{n}$ for all $i \in N$,
\item
{\em envy-free} if $u_i(\pi(i)) \ge u_i(\pi(j))$ for all $i,j \in N$, and 
\item
{\em complete} if $\bigcup_{i\in N}\pi(i)=V$.
\end{itemize}
Notice that an allocation that gives everybody an empty bundle is envy-free, 
so, to better express the idea of fairness, 
the requirement of envy-freeness is typically accompanied by 
completeness or Pareto-optimality.

We also consider {\em maximin share (MMS) allocations}~\citep{budish}, adapting
the usual definition to our setting as follows.  
Given an instance $I=(G,N,\calU)$ of CFD with $G=(V, E)$,
let $\Pi_n$ denote the space of all partitions of $V$ into $n$ connected pieces.
The {\em maximin share guarantee} of a player $i\in N$
is 
\[
\MMS_i(I) = \max_{(P_1, \dots, P_n)\in\Pi_n}\min_{j\in\{1, \dots, n\}} u_i(P_j). 
\]
Note that since we are only taking the maximum over connected partitions, these values may be lower than in the general setting without graph constraints.
A valid allocation $\pi$ is a {\em maximin share (MMS) allocation}
if we have $u_i(\pi(i))\ge \MMS_i(I)$ for each player $i\in N$.

 
We consider the following computational problems that all 
take an instance $I=(G,N,\calU)$ of the connected fair division problem as input. For computational purposes, we assume that utility functions take values in $\mathbb Q$. Hardness results will use unary encodings of utility values (unless noted otherwise).
\begin{itemize}[leftmargin=5.5mm]
\item {\sc Prop-CFD}: Does $I$ admit a proportional valid allocation?
\item {\sc Complete-EF-CFD}: Does $I$ admit a complete envy-free valid allocation?
\item{\sc MMS-CFD}: Does $I$ admit an MMS allocation?
\end{itemize}
We note that, given a valid allocation, one can check in polynomial time
whether it is proportional or envy-free; 
thus, the respective computational problems are in NP. 

In what follows, we assume that the number of items $m$ is at least as large
as the number of players $n$, since otherwise at least one player gets nothing.
Also, given a positive integer $k$, 
we write $[k]$ to denote the set $\{1, \dots, k\}$.

\section{Proportionality}
We start with the bad news: it is hard to find a proportional allocation, 
even 
if the graph $G$ is a path. 

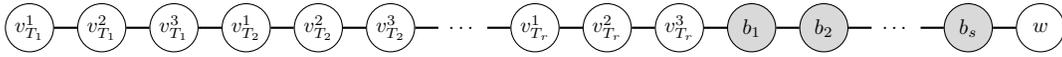
\begin{figure*}[htb]
\centering
\begin{tikzpicture}[scale=0.8, transform shape, every node/.style={minimum size=8mm, inner sep=1pt}]
	\node[draw, circle](0) at (0,0) {$v_{T_1}^1$};
	\node[draw, circle](1) at (1.2,0) {$v_{T_1}^2$};
	\node[draw, circle](2) at (2.4,0) {$v_{T_1}^3$};
	\node[draw, circle](3) at (3.6,0) {$v_{T_2}^1$};
	\node[draw, circle](4) at (4.8,0) {$v_{T_2}^2$};
	\node[draw, circle](5) at (6,0) {$v_{T_2}^3$};
	\node(6) at (7.2,0) {$\dots$};
	\node[draw, circle](7) at (8.4,0) {$v_{T_r}^1$};
	\node[draw, circle](8) at (9.6,0) {$v_{T_r}^2$};
	\node[draw, circle](9) at (10.8,0) {$v_{T_r}^3$};
	\node[draw, circle,fill=gray!30](10) at (12,0) {$b_1$};
	\node[draw, circle,fill=gray!30](11) at (13.2,0) {$b_2$};
	\node(12) at (14.4,0) {$\dots$};
	\node[draw, circle,fill=gray!30](13) at (15.6,0) {$b_{s}$};
	\node[draw, circle](14) at (16.8,0) {$w$};

	\draw[-, >=latex,thick] (0)--(1);	
	\draw[-, >=latex,thick] (2)--(1);	
	\draw[-, >=latex,thick] (2)--(3);	
	\draw[-, >=latex,thick] (4)--(3);	
	\draw[-, >=latex,thick] (4)--(5);	
	\draw[-, >=latex,thick] (6)--(5);	
	\draw[-, >=latex,thick] (6)--(7);	
	\draw[-, >=latex,thick] (8)--(7);	
	\draw[-, >=latex,thick] (8)--(9);	
	\draw[-, >=latex,thick] (10)--(9);	
	\draw[-, >=latex,thick] (10)--(11);		
	\draw[-, >=latex,thick] (12)--(11);		
	\draw[-, >=latex,thick] (12)--(13);		
	\draw[-, >=latex,thick] (14)--(13);							
\end{tikzpicture}
\caption{Graph constructed in the proof of Theorem~\ref{thm:prop-path}. 
\label{fig1}
}
\end{figure*}

\begin{theorem}\label{thm:prop-path}
{\sc Prop-CFD} is {\em NP}-complete even if $G$ is a path.
\end{theorem}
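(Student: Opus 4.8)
\emph{Membership in NP} is immediate and was essentially already noted: a valid allocation $\pi$ is a polynomial-size certificate, and checking that each $\pi(i)$ is connected, that the bundles are pairwise disjoint, and that $u_i(\pi(i)) \ge \tfrac1n$ for all $i$ takes polynomial time. So the real content of Theorem~\ref{thm:prop-path} is the hardness, and my plan is to reduce from \textsc{3-Partition}, which is strongly NP-hard and hence remains hard when all integers are written in unary. This matches the convention stated above that hardness results may use unary encodings of the utilities.

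Given a \textsc{3-Partition} instance with integers $a_1,\dots,a_{3r}$ satisfying $\sum_j a_j = rB$ and, crucially, $B/4 < a_j < B/2$ for all $j$ (so that any sub-multiset summing to $B$ has exactly three elements), I would build the path of Figure~\ref{fig1}: a block of three \emph{item} vertices $v_{T_i}^1, v_{T_i}^2, v_{T_i}^3$ for each index $i \in [r]$, laid out consecutively, followed by a chain of buffer vertices $b_1,\dots,b_s$ and a terminal vertex $w$. I would pick the number of players $n$ and scale the utilities so that the proportionality threshold $\tfrac1n$ corresponds exactly to one bin's worth of value, i.e.\ to a sub-multiset of numbers summing to $B$ after normalisation. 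The buffer vertices $b_1,\dots,b_s$ and the terminal $w$ carry (near-)zero value; their role is to absorb players who would otherwise be left with nothing and, more importantly, to supply the slack in cut positions needed so that a genuine partition can always be realised by connected subpaths.

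For the equivalence I would argue both directions. From a valid \textsc{3-Partition} solution, cutting the path at the induced bin boundaries and distributing the buffer/terminal vertices yields a valid allocation in which every player receives value at least $B$, hence is proportional. Conversely, in any proportional allocation each player's (connected) bundle carries value at least $B$; since the totals are tight and, by the strict bound $B/4 < a_j < B/2$, no satisfied player can reach the threshold with fewer than three or with more than three number-items, each player is pinned to exactly three items summing to $B$, which reconstructs a valid partition.

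The main obstacle is precisely that connectivity forces each bundle to be a \emph{subpath}, whereas \textsc{3-Partition} allows numbers to be regrouped \emph{arbitrarily} and non-consecutively. A single common valuation would collapse the problem to a (polynomial-time) consecutive-block partition, so genuine hardness must come from using \emph{many distinct player types} -- consistent with the companion positive result that the path case is tractable when the number of types is small. The delicate design task is therefore to let the combinatorial freedom of the bin-assignment be simulated jointly by the choice of cut points and by the assignment of the resulting pieces to the different types, while engineering the utilities (and the ordering of the item vertices, together with the buffer block) so that \emph{no} connected allocation can beat the threshold except by exhibiting a valid partition -- in particular ruling out allocations that try to exploit incompleteness or to straddle bin boundaries. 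Verifying this tight correspondence, with $B/4 < a_j < B/2$ as the lever that forces every satisfied player onto exactly three items, is where the bulk of the work lies.
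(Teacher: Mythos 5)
There is a genuine gap: you have correctly identified the central obstacle --- on a path every bundle is a consecutive subpath, so under a single valuation the problem is solvable greedily, and the hardness must be smuggled in through multiple player types --- but you have not actually built the gadget that overcomes it. Your forward direction already breaks down as written: a \textsc{3-Partition} solution groups the numbers into triples that are in general \emph{not} consecutive on the path, so ``cutting the path at the induced bin boundaries'' does not produce connected pieces, and no fixed ordering of the number-vertices can make all candidate triples consecutive. The final paragraph of your proposal, where you defer ``the delicate design task'' of letting cut points plus piece-to-type assignment simulate an arbitrary regrouping, is precisely the entire content of the proof; without that construction there is no reduction to check. (A smaller point: you describe $b_1,\dots,b_s$ and $w$ as near-zero-value padding, but in the intended construction vertices of this kind must carry substantial value to do any work --- zero-value buffers can simply be appended to an adjacent bundle and change nothing.)

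The paper's proof avoids your obstacle by reducing from \textsc{Exact-3-Cover} rather than \textsc{3-Partition}, and by arranging matters so that the cut positions are essentially \emph{forced} rather than free: a dummy player must take $w$, each element-player $i_x$ can only reach the threshold $1/n$ by taking a single small vertex $v_T^k$ with $x\in T$, and each set-player $i_T$ can only reach it by taking either its own triple interval $v_T^1,v_T^2,v_T^3$ or one of the $s$ big vertices $b_1,\dots,b_s$ (each worth exactly $1/n$ to every set-player). The combinatorial choice encoding the cover is thus not ``where to cut'' but ``which $s$ of the $r$ set-players are bought off with big vertices,'' which frees exactly $s$ triple intervals whose small vertices can be handed out to the $3s$ element-players. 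If you want to pursue a \textsc{3-Partition}-style reduction you would need an analogous device that converts the arbitrary grouping of numbers into a sequence of forced local choices along the path; as it stands, that device is missing.
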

\begin{proof}
We describe a polynomial-time reduction 
from the NP-complete problem {\sc Exact-3-Cover (X3C)} \citep{Garey79}. Recall that
an instance of {\sc X3C} is given by a set of elements $X=\{x_1,x_2,\dots, x_{3s}\}$ 
and a family $\calT=\{T_1,T_2,\dots, T_r\}$ of three-element subsets of $X$;
it is a `yes'-instance if and only if $X$ can be covered by $s$ sets from~$\calT$.
This problem remains NP-complete if for each element $x\in X$ its frequency
$p_x=|\{T\in \calT: x\in T\}|$ is at most $3$.

Consider an instance $J=(X, \calT)$ of X3C; for each $T\in\calT$, we denote the elements of $T$
by $x_T^1, x_T^2, x_T^3$. We construct an instance $I$ of  {\sc Prop-CFD} as follows.
There are three small vertices $v_T^1,v_T^2,v_T^3$ for each set $T\in \calT$, 
a set of $s$ big vertices $B=\{b_1,b_2,\dots, b_s\}$ 
and a dummy vertex $w$.
The edges of $G$ are shown in Figure~\ref{fig1}.

There is one player $i_T$ for each $T\in \calT$, 
one player $i_x$ for each $x\in X$ and one dummy player $d$.
Hence the total number of players is $n=3s+r+1$. Define the utilities as: 
$$
u_{i_T}(v) = \left\{
\begin{array}{ll}
1/(3n) & \quad \mbox{\ if \ } v=v_T^k \phantom{\mbox{\ and\ } x\in T} \\ 
1/n &  \quad \mbox{\ if \ } v\in B\\
(n-s-1)/n & \quad \mbox{\ if \ } v=w \\ 
0 &\quad\mbox{\ otherwise}
\end{array}
\right.
$$
$$
u_{i_x}(v) = \left\{
\begin{array}{ll} 
1/n  & \quad \mbox{\ if \ } v=v_T^k \mbox{\ and\ } x\in T\\
\makebox[0pt][l]{($n-3p_x)/n$}\phantom{(n-s-1)/n} & \quad \mbox{\ if \ } v=w\\ 
0 & \quad \mbox{\ otherwise}
\end{array}
\right.
$$
$$
u_{d}(v)= \left\{
\begin{array}{ll}
\makebox[0pt][l]{$1$}\phantom{(n-s-1)/n\:} &\quad \mbox{\ if \ } v=w \phantom{\mbox{\ and\ } x\in T}\\
      0 &\quad \mbox{\ otherwise}
\end{array}
\right.
$$
By construction $u_i(V)=1$ for each $i\in N$. 
As player $d$ assigns a positive value to vertex $w$ only, 
she must receive this vertex in every proportional allocation. 
Given that $w$ is allocated to $d$, 
an allocation is proportional if and only if
each player $i_x$ receives a small vertex 
$v_T^k$ such that $x\in T$, and each player $i_T$ 
receives vertices $v_T^1, v_T^2, v_T^3$ (a triple interval) or a vertex from $B$.

Suppose that $J$ admits a cover $\calT'$ of size $s$. 
Let $\mu$ be a matching between $\calT'$ and $B$.
Assign intervals to players $i_x$ and $i_T$ as follows:
\begin{itemize}\itemsep0pt
\item 
For each $T\in \calT'$, player $i_T$ is assigned to vertex $\mu(T)$.
\item 
For each $T\notin \calT'$, player $i_T$ is assigned to the triple interval $v_T^1,v_T^2,v_T^3$.
\item 
Each player $i_x$ is assigned to the small vertex $v_T^k$ such that $x=x_T^k$ and $T\in \calT'$. 
\end{itemize}  
Then each player is assigned one connected piece and her value for that piece is at least $1/n$.

Conversely, suppose that $I$ admits a proportional valid allocation. As 
$|B|=s$, the number of $T$-players assigned to triple intervals is $r-s$. Hence, 
the number of triple intervals available for $x$-players is $s$, 
and the respective sets constitute an exact cover for $X$.
\end{proof}


\noindent 
In contrast, if $G$ is a star, finding a proportional allocation is easy. 
Our algorithm for this problem, as well as all other algorithms in this section, use matching techniques and can be adapted to also find valid allocations 
that maximize egalitarian welfare, i.e., the utility of the worst-off agent.
\begin{theorem}\label{thm:poly:stars:proportional}
{\sc Prop-CFD} is solvable in polynomial time if $G$ is a star.
\end{theorem}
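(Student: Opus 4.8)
The plan is to exploit the very restricted structure of connected subgraphs of a star. Let $c$ be the center and let the remaining vertices be the leaves. A nonempty connected subset of a star is either a single leaf or a set containing $c$ together with an arbitrary subset of leaves. Consequently, in any valid allocation at most one bundle contains the center, and every player not holding the center receives either a single leaf or nothing.

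First I would argue that we may assume the center is allocated. Indeed, in a proportional allocation every player receives value at least $1/n>0$, hence a nonempty---thus singleton---bundle whenever she does not hold $c$; if $c$ is unallocated we may hand it to any one player together with her current leaf without decreasing anyone's utility. Thus it suffices, for each candidate player $j$ who receives the center, to decide whether a proportional allocation exists in which $j$ holds $c$ and every other player holds exactly one leaf.

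Fix such a $j$. The other $n-1$ players must be matched to distinct leaves, where player $i\ne j$ may be assigned leaf $\ell$ only if $u_i(\ell)\ge 1/n$; player $j$ then receives $c$ together with all unmatched leaves. Since $u_j(V)=1$, the value of $j$'s bundle equals $1-u_j(\{\text{matched leaves}\})$, so $j$ is proportional if and only if the total $u_j$-value of the leaves handed to the others is at most $(n-1)/n$. I would therefore build the bipartite graph with the players $N\setminus\{j\}$ on one side and the leaves on the other, an edge $\{i,\ell\}$ whenever $u_i(\ell)\ge 1/n$, and edge cost $u_j(\ell)$, and compute a minimum-cost matching saturating $N\setminus\{j\}$ (for instance via min-cost flow, which runs in polynomial time on rational data). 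Such a matching exists and has cost at most $(n-1)/n$ for some $j$ if and only if $I$ admits a proportional allocation; when it does, assigning the matched leaves to the corresponding players and $c$ plus the leftover leaves to $j$ yields the desired allocation. Ranging over all $n$ choices of $j$ and invoking the polynomial-time matching subroutine for each gives an overall polynomial-time algorithm.

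The main thing to get right is the coupling of the two proportionality requirements: the constraint $u_i(\ell)\ge 1/n$ on the matching edges guarantees proportionality of the non-center players, while minimizing the $u_j$-weight of the matched leaves maximizes what is left for $j$ and is exactly what lets us test $j$'s own proportionality. The correctness direction that needs care is the converse---showing that from an arbitrary proportional allocation, after moving the center onto some player $j$, one extracts a saturating matching respecting the edge constraints whose $u_j$-weight is at most $(n-1)/n$, so that the minimum-cost matching we compute is guaranteed to detect it.
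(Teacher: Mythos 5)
Your proposal is correct and follows essentially the same approach as the paper: for each candidate holder $j$ of the center, build the bipartite graph between $N\setminus\{j\}$ and the leaves with edges where $u_i(\ell)\ge 1/n$ and costs $u_j(\ell)$, and test for a saturating matching of total cost at most $(n-1)/n$ via minimum-weight matching. The only difference is that you make explicit the (easy) reduction to the case where the center is allocated, which the paper handles implicitly by quantifying over all possible recipients of $c$.
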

\begin{proof}
Let $c$ denote the center of the star.
For each player $ i \in  N$ we check whether there is
a proportional valid allocation $\pi$ assigning $c$
to $i$. To this end, we create a bipartite graph $H=(Z,Z',L)$ with
$Z=N\setminus \{i\}$, $Z' = V\setminus\{c\}$ and $\{j,v\}\in L$
if and only if $u_j(v)\ge 1/n$; the weight of this edge is $u_{i}(v)$. 
Note that $|Z|\le |Z'|$. We say that a matching in $H$ is perfect 
if it matches all vertices in $Z$. Now, observe that
$I$ admits a proportional valid allocation
$\pi$ that assigns $c$ to $i$ 
if and only if $H$ admits a perfect matching $M$ of weight $w(M)\le (n-1)/n$;
indeed, assigning $c$ together with all remaining vertices 
to agent $i$ gives her a connected piece that she values as $1-w(M)\ge 1/n$.
It remains to observe that a minimum-weight perfect matching
can be computed in polynomial time [see e.g. Chapter $11$ of \citealp{Korte}]. 
\end{proof}


\subsection{Bounded Number of Agent Types}
If the underlying graph is a path and all players are of the same type, 
then a simple greedy algorithm finds a proportional allocation 
(or reports that none exists) in linear time:
we build connected pieces one by one, by 
moving along the path from left to right and adding vertices to the current piece
until its value to a player reaches $1/n$; at this point we start building a new piece.
This procedure creates at most $n$ pieces;
a proportional valid allocation exists if and only if it
creates exactly $n$ pieces. 
More generally, if $G$ is a path and the number of agent types
is bounded by a constant, a simple dynamic
program can check the existence of a proportional allocation in polynomial time. 
A problem is {\em slice-wise polynomial} (XP) with respect to a parameter $k$ if each instance $I$ of this problem can be solved in time $|I|^{f(k)}$ where $f$ is a computable function.

\begin{theorem}\label{thm:XP:path}
{\sc Prop-CFD} is in {\em XP} with respect to the number of player types $p$
if $G$ is a path.
\end{theorem}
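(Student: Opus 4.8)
The plan is to design a dynamic program that scans the path from left to right, exploiting the fact that on a path every connected bundle is a contiguous interval of vertices, and that players sharing a type are interchangeable. Label the vertices $v_1, \dots, v_m$ in the order they appear along the path, and fix representative utility functions $u_1, \dots, u_p$, one per type, together with the number $n_t$ of players of type $t$, so that $\sum_{t \in [p]} n_t = n$. Any valid allocation consists of $n$ pairwise disjoint intervals, and these intervals inherit a natural left-to-right ordering; because completeness is not required for proportionality, vertices may be left unallocated in the gaps between consecutive intervals or at either end of the path.

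The state of the dynamic program is a pair $(j, (c_1, \dots, c_p))$, where $j \in \{0, 1, \dots, m\}$ records that the allocation has been decided on the prefix $\{v_1, \dots, v_j\}$ with $v_j$ being either unallocated or the right endpoint of some interval, and $c_t$ counts how many type-$t$ players have already received a proportional interval inside this prefix. First I would precompute, for each type $t$, all prefix sums of $u_t$ and, by a two-pointer sweep, for every start position the right endpoint of the shortest interval whose type-$t$ value reaches $1/n$; this makes every transition evaluable in constant time. From a state $(j, (c_1, \dots, c_p))$ there are two kinds of transitions: (i) skip vertex $v_{j+1}$, moving to $(j+1, (c_1, \dots, c_p))$; and (ii) for each type $t$ with $c_t < n_t$, assign to a fresh type-$t$ player the shortest interval $v_{j+1}, \dots, v_{k}$ whose type-$t$ value first reaches $1/n$ (if one exists), moving to $(k, (c_1, \dots, c_t + 1, \dots, c_p))$. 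A proportional valid allocation exists if and only if some state $(j, (n_1, \dots, n_p))$ with $j \le m$ is reachable from the initial state $(0, (0, \dots, 0))$.

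The correctness of restricting transition (ii) to the shortest qualifying interval is the one point that needs an argument, and I would handle it by an exchange argument: given any valid proportional allocation, replace each interval by the shortest proportional prefix of itself for the owning type and declare the remaining vertices unallocated. This preserves disjointness, keeps every player's value at least $1/n$, and leaves the type-count vector unchanged, so it is captured by the program; the skip transition then lets subsequent intervals begin at any later position, so no generality is lost. The key design choice making the result XP rather than a brute-force search over player identities is that the state records only the \emph{number} of allocated players of each type: there are at most $\prod_{t \in [p]} (n_t + 1) \le (n+1)^p$ distinct count vectors, hence $O\bigl(m \cdot (n+1)^p\bigr)$ states, each with $O(p)$ outgoing transitions evaluated in constant time. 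The total running time is therefore $O\bigl(m \cdot p \cdot (n+1)^p\bigr) = |I|^{O(p)}$, establishing membership in XP with respect to $p$.
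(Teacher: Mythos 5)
Your proof is correct and follows essentially the same approach as the paper: a left-to-right dynamic program over the path whose state is the current prefix boundary together with the vector counting, for each of the $p$ types, how many players of that type have already received a piece worth at least $1/n$, giving $O(m(n+1)^p)$ states. The only differences are cosmetic optimizations --- the paper absorbs unallocated gaps into adjacent pieces and tries all right endpoints $\{v_{s+1},\dots,v_i\}$ rather than using an explicit skip transition and the shortest-qualifying-interval canonicalization, but both yield the same XP bound.
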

\begin{proof}
Let $G=(V, E)$, where $V = \{v_1,\dots,v_m\}$, $E=\{\{v_i, v_{i+1}\}: i\in [m-1]\}$. 
Suppose there are $n_t$ players of type $t$, for $t\in[p]$.
We say that a player is \emph{happy} if she gets a connected piece 
of value
at least $1/n$. Let $V_0 = \emptyset$ and $V_i=\{v_1, \dots, v_i\}$, $i > 1$.

For $i=0, \dots, m$, and a collection of indices
$j_1, \dots, j_p$ such that $0\le j_k\le n$ for each $k\in[p]$,
let $A_i[j_1, \dots, j_p]=1$ if
there exists a valid partial allocation $\pi$ of $V_i$ with
$j_k$ happy agents of type $k$, $k\in[p]$,
and let $A_i[j_1, \dots, j_p]=0$ otherwise.
Clearly, $A_0[j_1, \dots, j_p]=1$ if and only if $j_k=0$ for all $k\in[p]$.
For $i=1, \dots, m$, we have $A_i[j_1, \dots, j_p]=1$ if and only if 
there exists a value $s<i$ and $t\in[p]$
such that $A_s[j_1, \dots, j_t-1, \dots, j_p]=1$ and a player
of type $t$ values the set of items $\{v_{s+1}, \dots, v_i\}$
at $1/n$ or higher. 

A proportional allocation exists if $A_m[j_1, \dots, j_p]=1$
for some collection of indices $j_1, \dots, j_p$ such that
$j_t\ge n_t$ for all $t\in[p]$.
There are at most $(m+1)(n+1)^p$ values to compute, 
and each value can be found in time $O(mt)$ using unit cost
arithmetics. Thus, {\sc Prop-CFD} is in XP
with respect to $p$.
\end{proof}

\subsection{Bounded Number of Agents}
If the number of agents $n$ is bounded by a constant and $G$ is a tree, 
then {\sc Prop-CFD} can be solved in polynomial time: we consider all possible ways
of partitioning the tree in $n$ non-empty connected pieces (a partition Scan be associated with 
a set of $n-1$ edges to be deleted, so there are ${m-1\choose n-1}$ partitions), and, for each partition,
we check if each player can be matched to a piece that she values at $1/n$ or higher
(by solving a simple bipartite matching problem). This shows that {\sc Prop-CFD} 
on trees is in XP with respect to the number of players. 
A more careful argument shows that {\sc Prop-CFD} on trees, is, in fact, fixed parameter tractable with respect to this (weaker) parameter. A problem is {\em fixed parameter tractable} (FPT) with respect to a parameter $k$ if each instance $I$ of this problem can be solved in time $f(k)\poly(|I|)$ where $f$ is a function that depends only on $k$.

\begin{theorem}\label{thm:FPT:tree}
{\sc Prop-CFD}  is in {\em FPT} with respect to $n$
when $G$ is a tree.
\end{theorem}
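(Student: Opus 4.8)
The plan is to turn the XP algorithm sketched just above into an FPT one by replacing the enumeration of all $\binom{m-1}{n-1}$ partitions with a single bottom-up dynamic program whose table is polynomial in $m$ and exponential only in $n$. The first observation I would exploit is that a proportional allocation need not be complete: it suffices to find $n$ pairwise-disjoint connected subtrees $P_1,\dots,P_n$ with $u_i(P_i)\ge 1/n$ for every $i\in N$, leaving any remaining vertices unallocated. So the task is really to \emph{pack} $n$ disjoint connected pieces, one owned by each player, into the tree.

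I would root $G$ at an arbitrary vertex $r$ and process vertices in post-order. For a vertex $v$ with subtree $T_v$, note that every piece is connected and has a unique top-most vertex (its ``head''), so at most one piece extends from $T_v$ upward past $v$, and such a piece must contain $v$; moreover, once $v$ is placed in some piece, no other player's piece can ever use a vertex of $T_v$, since it would have to pass through $v$. This ``sealing'' property is what makes a compact state possible. I would index the table by a triple $(v,j,S)$, where $S\subseteq N$ is the set of players whose pieces lie entirely inside $T_v$ and already reach value $1/n$, and $j\in N\setminus S$ is the owner of the active piece currently containing $v$ (with a special symbol $\bot$ for the case that $v$ is unallocated, in which case no piece crosses $v$). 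The value stored is the \emph{maximum} of $u_j(Q)$ over all connected pieces $Q\ni v$ that realize the configuration $(S,j)$ inside $T_v$.

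The crucial point—and the heart of the argument—is that storing only this maximum is sound, even though $u_j(Q)$ can in principle take exponentially many values. By the sealing property, the vertices of $T_v$ not used by the satisfied players in $S$ can be of use to no player other than $j$, so enlarging the active piece $Q$ never blocks anyone; and a larger value of $u_j(Q)$ weakly dominates a smaller one for every future decision, since the active piece can only close (where a larger value helps it reach $1/n$) or keep growing into vertices outside $T_v$. This collapses the a priori exponential space of piece-values down to a single number per state. The transition at $v$ combines the tables of its children: each child either has its connecting edge cut (its active piece closes, contributing its owner to $S$ iff its stored value is $\ge 1/n$, and otherwise being discarded) or is merged into $v$'s active piece (which forces the child's active owner to equal $j$ and adds its stored value to that of $v$, the latter initialized to $u_j(v)$). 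The satisfied sets contributed by distinct children must be pairwise disjoint and avoid $j$, so the combination is a subset convolution over subsets of $N$; carrying a running table indexed by the accumulated satisfied set and maximizing the active value costs $2^{O(n)}$ per tree edge. Summing over all edges and the $O(n)$ choices of owner gives total running time $2^{O(n)}\cdot\poly(m)$, and since every stored number is a sum of at most $m$ input utilities, no pseudo-polynomial blow-up occurs.

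A proportional allocation then exists if and only if the table at the root $r$ admits a configuration in which all $n$ players are satisfied, i.e.\ either $S=N$ with owner $\bot$, or $S=N\setminus\{j\}$ with an active owner $j$ whose stored value is $\ge 1/n$. The main obstacle I anticipate is precisely justifying that the maximum active value is a sufficient statistic: the sealing property together with the monotonicity of the $1/n$ threshold is what guarantees this, and establishing it rigorously is the delicate part. The remaining work—the subset-convolution transition and the bookkeeping of the disjointness constraints across children—is routine.
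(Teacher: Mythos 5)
Your proposal is correct and is essentially the paper's own proof: the paper likewise roots the tree and computes, for each vertex $v$, each ``active'' player $i$ whose piece contains $v$, and each set $S$ of already-satisfied players, the maximum utility $A_v[i,S]$ that the active player can extract from the subtree $D(v)$ --- which is exactly your observation that, by the sealing/dominance argument, this maximum is a sufficient statistic. The only difference is mechanical: the paper combines the children's tables by enumerating the (at most Bell-number-many) partitions of $S$ and solving a bipartite matching for each, whereas you fold the children in one at a time via a subset DP; both give $f(n)\cdot\poly(m)$ time.
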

\begin{proof}
We turn $G$ into a rooted tree by choosing an arbitrary node $r$ as the root.
Given a vertex $v$, we denote by $C(v)$ the set of children of $v$ 
and by $D(v)$ the set of descendants of $v$ (including $v$) in the rooted tree.

For each $v\in V$, $S \subsetneq N$, and $i \in N \setminus S$, 
let $\Pi_{i, v, S}$ be the set of all valid allocations $\pi:S\cup \{i\} \rightarrow 2^{D(v)}$ 
with $v \in \pi(i)$ and $u_j(\pi(j)) \ge {1}/{n}$ for all $j \in S$,
and define
\[
A_v[i,S]=\max\:\{u_i(\pi(i)): \pi\in \Pi_{i, v, S}\};
\]
by convention, $A_v[i,S]=-\infty$ when $\Pi_{i, v, S}$ is empty.
Note that we have a `yes'-instance of {\sc Prop-CFD}
if and only if $A_r[i, N\setminus\{i\}]\ge 1/n$ for some $i\in N$.

We will now explain how to compute all values $A_v[i,S]$ in a bottom-up manner. When $v$ is a leaf of the rooted tree, we set $A_v[i,S]=u_i(v)$ if $S=\emptyset$ and $A_v[i,S]=-\infty$ otherwise.

Now, suppose that $v$ is an internal vertex of the rooted tree. 
If $S=\emptyset$, we have$A_v[i,S]=u_i(D(v))$, so from now on assume that $S \neq \emptyset$. We note that for each allocation $\pi\in \Pi_{i, v, S}$ the bundle of each player in $S$ is fully contained in a subtree $D(z)$ for some $z\in C(v)$. This induces a partition of players in $S$ into $|C(v)|$ groups. To find an allocation $\pi\in \Pi_{i, v, S}$ that maximizes the utility of player $i$, we go through all possible partitions of $S$ with at most $|C(v)|$ parts. For each such partition $\calP$, we try to find a valid allocation $\pi\in \Pi_{i, v, S}$ such that for each part $P\in\calP$ there is a unique child $z$ of $v$ such that the bundle of each player $i\in P$ is fully contained in $D(z)$; among such allocations, we pick one that maximizes $u_i(\pi(i))$.

To find such an allocation, we will construct an instance of the matching problem, which will decide which part of $\calP$ is assigned to which $z\in C(v)$. 
Thus, we construct a weighted bipartite graph $H_{\calP}=(Z,Z^\prime,L)$ as follows. 
We introduce one node $P$ for each part $P \in \calP$ and a set $X$ of $|C(v)|-|\calP|$ dummy nodes and set $Z=\calP\cup X$. We let $Z^\prime=C(v)$. By construction, $|Z|=|Z^\prime|$.

For every pair $(P, z)$ such that $P\in\calP$, $z \in Z^\prime$, and players in $P$ can be allocated items in $D(z)$, i.e., $A_z[i,P] \neq -\infty$ or $A_z[j,P \setminus \{j\}] \geq \frac{1}{n}$ for some $j \in P$, we construct an edge $\{P, z\} \in L$ with weight $w(P, z)$ 
corresponding to the maximum utility that player $i$ can receive from the set of items $D(z)$ under the constraint that players in $P$ 
obtain pieces of $D(z)$; specifically,
\[
w(P, z)=
\begin{cases}
A_z[i,P] & \mbox{if}~A_z[i,P] \neq -\infty,\\
0 & \mbox{otherwise}.
\end{cases}
\]
For every pair  $(x, z)$ with $x\in X$, $z \in Z^\prime$, we construct an edge $\{x, z\} \in L$ with weight $w(x, z)=u_i(D(z))$; 
matching $x$ to $z$ corresponds to assigning the items in $D(z)$ to player $i$. If $H_{\calP}$ admits a perfect matching, we set $w(\calP)$ to be the maximum weight of a perfect matching; otherwise, we set $w(\calP)=-\infty$.
Finally, we set 
\[
A_v[i,S]=\max \{\, w(\calP) : \calP~\mbox{is a partition of $S$} \land |\calP| \le |C(v)|\,\}.
\]
We omit the proof of the bound on the running time.
\end{proof}

We note that placing strong constraints on the underlying graph
is crucial for obtaining the easiness results in Theorems~\ref{thm:XP:path} and~\ref{thm:FPT:tree}. 
This is illustrated by the following simple proposition,
obtained by adapting a proof by \cite{DemkoHill98} for the standard
setting (with no graph constraints), which shows that the XP membership with respect to the number of players/types cannot be extended to arbitrary graphs.

\begin{prop}\label{thm:prop-2ag}
When utilities are encoded in binary, 
{\sc Prop-CFD} is {\em NP}-complete even for $n=2$, $p=1$,
and even if the underlying graph $G$ is bipartite.
\end{prop}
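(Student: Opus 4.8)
The plan is to establish NP-hardness via a reduction from a suitably chosen NP-complete partition-style problem, since the statement concerns binary-encoded utilities with only two agents of the same type. The natural candidate is \textsc{Partition}: given positive integers $a_1, \dots, a_k$ with $\sum_j a_j = 2B$, decide whether some subset sums to exactly $B$. This problem is NP-complete under binary encoding precisely because the numbers can be exponentially large, which matches the binary-encoding hypothesis; it would be useless under unary encoding, consistent with the proposition's caveat. The hint that we adapt \citet{DemkoHill98} strongly suggests that their argument already exhibits an instance with two identical agents where a proportional (i.e., each getting value $\ge 1/2$) allocation exists if and only if the number set can be split evenly.

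First I would fix a \textsc{Partition} instance $a_1, \dots, a_k$ and build a CFD instance with $n = 2$ and a single player type $p = 1$. The key design choice is the graph: to force proportionality to encode a balanced split, I would make $G$ bipartite. A clean choice is to create one vertex $v_j$ for each integer $a_j$, assign both (identical) agents utility proportional to $a_j$ on $v_j$ (normalized so utilities sum to $1$, i.e.\ $u(v_j) = a_j/(2B)$), and design the bipartite structure so that the only way to cut the graph into two connected pieces is to choose an arbitrary subset of the vertices for one side. For instance, one can take a bipartite graph — such as a complete bipartite graph $K_{2,k}$ with the two ``hub'' vertices carrying zero value and each $v_j$ adjacent to both hubs — so that a valid allocation splitting the vertices into two connected bundles (one hub in each bundle) corresponds to choosing any subset of the $v_j$ for each side. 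Then both agents get value $\ge 1/2$ simultaneously if and only if each side has total value exactly $1/2$, i.e.\ exactly $B$, which is precisely a solution to \textsc{Partition}.

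The argument then has two directions, both routine once the gadget is right. If \textsc{Partition} has a solution $S$, placing $\{v_j : j \in S\}$ with one hub in the first bundle and the remaining $v_j$ with the other hub in the second bundle gives each identical agent value exactly $1/2 \ge 1/n$, so the allocation is proportional and valid (each bundle is connected since every $v_j$ is adjacent to its hub). Conversely, a proportional valid allocation forces each of the two agents to receive value at least $1/2$; since the total value is $1$ and there are two agents with identical valuations, both bundles must have value exactly $1/2$, and the connectivity/bipartite structure guarantees this corresponds to a subset summing to $B$. Membership in NP is immediate from the earlier observation in the excerpt that proportionality of a given allocation is checkable in polynomial time.

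The main obstacle is engineering the graph so that \emph{connectivity} does not artificially restrict which subsets can be realized as a bundle — I must ensure that every subset of the value-carrying vertices (or at least enough of them to capture an arbitrary \textsc{Partition} instance) is achievable as a connected piece, while keeping $G$ bipartite and the agents of a single type. A complete bipartite gadget handles this, but one must verify carefully that the zero-value hub vertices can always be allocated (they must go somewhere, and giving each bundle exactly one hub preserves both connectivity and bipartiteness) and that no ``cheating'' allocation, e.g.\ giving one agent an empty or near-empty bundle, can be proportional; the latter is ruled out because any agent with value below $1/2$ violates proportionality. Checking these edge cases and confirming the reduction is polynomial-time under binary encoding is where the care lies, but none of it requires heavy computation.
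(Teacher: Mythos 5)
Your proposal matches the paper's proof essentially exactly: the paper also reduces from \textsc{Partition} using the complete bipartite graph $K_{2,k}$ with two zero-value hub vertices $w_1,w_2$, each value-carrying vertex $v_i$ adjacent to both hubs, and two identical agents with $u(v_i)=a_i/(2k)$. The correctness argument you sketch (each subset realizable as a connected bundle via its hub, and proportionality forcing an exact half-half split) is the same one the paper relies on.
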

\begin{proof}
We describe a polynomial-time reduction from {\sc Partition}. 
Recall that an instance of {\sc Partition} is given by 
a set of integers $J = \{a_i: i\in H\}$ such that $\sum_{i\in H}a_i=2k$. 
It is a `yes'-instance if and only if there exists 
a subset of indices $H'\subset H$ such that 
$\sum_{i\in H'}a_i=\sum_{i\in H\setminus H'}a_i=k$. 

Define an instance $I$ of {\sc Prop-CFD} as follows. 
Let $G=(V,E)$ where $V=\{v_i: i\in H\}\cup\{w_1,w_2\}$ 
and $E= \{\{v_i,w_1\}, \{v_i,w_2\}: i\in H\}$. 
There are two players with the same utility function
$u(v_i)=a_i/(2k)$ for $i\in H$ and $u(w_1)=u(w_2)=0$. 
Then $I$ admits a proportional valid allocation if and only if $J$ 
is a `yes'-instance of {\sc Partition}.  
\end{proof}


\section{Envy-freeness}
Envy-freeness turns out to be computationally more challenging than proportionality:
finding a complete envy-free allocation is NP-hard even if the underlying graph
is a star (for complete graphs, this result is shown by \citet{Lipton04}).
\begin{theorem}\label{thm:NPc:stars:EF:Complete}
{\sc Complete-EF-CFD} is {\em NP}-complete even if $G$ is a star. 
\end{theorem}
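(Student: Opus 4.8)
The plan is to give a polynomial-time reduction from a known NP-complete
problem to \textsc{Complete-EF-CFD} restricted to stars. Since membership in NP is
immediate (given an allocation, checking completeness and envy-freeness takes
polynomial time), the whole task is the hardness direction. A natural source
problem is one about partitioning or scheduling, because a star forces a very
rigid combinatorial structure: the center vertex $c$ goes to exactly one agent,
and every other agent must receive a subset of the leaves (a single leaf, in
fact, if we want each bundle to be connected and nonempty, since two distinct
leaves are not connected through anything but $c$). So in any complete valid
allocation on a star with $n$ agents, one distinguished agent gets $c$ together
with some leaves, and each remaining agent gets exactly one leaf; completeness
then forces the number of leaves to match up with the number of agents in a
controlled way. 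This rigidity is exactly what makes the reduction possible.

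First I would choose the source problem. A good candidate is a balanced
partition or \textsc{3-Partition}-style problem, or possibly a scheduling
problem where jobs must be split into groups of equal total size; the envy-free
constraint translates into the requirement that the agent holding $c$ does not
envy any leaf-holder and vice versa, which behaves like a two-sided bound on
the value each agent can extract. I would build an instance in which the leaves
encode the items to be partitioned, assign utilities so that the only way to
kill all envy is to hand out leaves in a pattern that corresponds to a valid
partition, and make the agent receiving $c$ a ``dummy'' who values $c$ highly so
that completeness forces the leftover leaves onto that agent. The core of the
argument is the pair of equivalences: a valid solution of the source instance
yields a complete envy-free allocation (easy direction, by exhibiting the
allocation and verifying the inequalities $u_i(\pi(i)) \ge u_i(\pi(j))$ for all
pairs), and conversely any complete envy-free allocation induces a valid
solution of the source instance (the harder direction, where I must argue that
the envy-freeness inequalities force the leaf assignment into the desired
combinatorial shape).

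The main obstacle I anticipate is calibrating the utilities so that
\emph{every} complete envy-free allocation is forced to respect the intended
structure, not merely that \emph{some} such allocation exists. On a star,
envy-freeness is a genuinely global constraint: the agent with the center can
envy a leaf-holder, a leaf-holder can envy the center-agent, and two
leaf-holders can envy each other, so I must simultaneously control all these
comparisons. The delicate point is preventing ``degenerate'' complete envy-free
allocations that satisfy the inequalities for trivial reasons (for instance, by
making one agent's bundle so valuable to everyone that no one can match it)
without corresponding to a genuine partition. I would handle this by using
carefully chosen tie-breaking gaps in the utility values and by exploiting the
forced structure (exactly one agent holds $c$, everyone else holds one leaf) so
that the envy-freeness conditions reduce to numerical inequalities that are
satisfiable precisely when the source instance is a yes-instance. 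Once the
utilities are pinned down, both directions of the equivalence should follow by
direct verification of the defining inequalities.
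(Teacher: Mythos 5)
Your proposal is a plan rather than a proof: you correctly observe that the problem is in NP, that on a star exactly one agent can hold the center $c$ while every other agent holds at most one leaf, and that completeness forces the center-holder to absorb all leftover leaves. But you never commit to a source problem, never define the items, players, or utility values, and never carry out either direction of the equivalence. The step you yourself flag as the delicate one --- calibrating utilities so that \emph{every} complete envy-free allocation has the intended structure --- is exactly the content of the theorem, and it is left entirely unresolved. As written, the argument cannot be checked, so there is a genuine gap.

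There is also a structural reason to doubt the specific direction you lean toward. On a star, each non-center agent receives a single leaf, so the only nontrivial combinatorial object produced by a complete valid allocation is the \emph{one} subset of leaves absorbed by the center-holder. A \textsc{3-Partition}- or scheduling-style reduction needs many agents each receiving a multi-item bundle of prescribed total value, which the star topology forbids. The paper instead reduces from \textsc{Independent Set}, which is a property of a single selected subset and therefore fits the star perfectly: there is an item and a player for each vertex and each edge of the input graph, $k$ dummy leaves, and a center player who values only $c$. Envy-freeness forces the center player to take $c$, forces each edge player $i_\ell$ to take its own item $\ell$ (utility $3/7$, with utility $2/7$ for each endpoint of $\ell$), and the dummy items prevent the center player from hoarding dummies; completeness then forces the center player to absorb exactly $k$ vertex items, and the edge players' utilities ($2/7+2/7 > 3/7$) make them envious precisely when both endpoints of some edge land in that bundle. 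That last inequality is the ``tie-breaking gap'' you were searching for; without an explicit gadget of this kind the converse direction of your reduction does not go through.
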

\begin{proof}
Our hardness proof proceeds by a reduction from {\sc Independent Set}. 
Recall that an instance of {\sc Independent Set} is given by an undirected graph
$(W, L)$ and an integer $k$; it is a `yes'-instance if and only if $(W, L)$ contains 
an independent set of size $k$.
Given an instance $(W, L)$ of {\sc Independent Set}, we construct an instance
of {\sc Complete-EF-CFD} as follows. For each vertex $w\in W$
we create an item $w$ and a player $i_w$. 
Similarly, for each edge $\ell\in L$ we create an item $\ell$ and a player $i_\ell$.
We also create a set of dummy items $D$ with $|D|=k$, as well as an item $c$ and a player $i_c$.
The graph $G$ is a star with center~$c$ and set of leaves $W\cup L\cup D$.
Finally, define utility functions as follows.
\begin{itemize}
\setlength\itemsep{1pt}
\item
For each $w\in W$, we set $u_{i_w}(w) = 1/(k+1)$ and $u_{i_w}(d) = 1/(k+1)$ for each $d\in D$. 
\item
For each $\ell\in L$ with 
$\ell = \{x,y\}$, we set $u_{i_\ell}(\ell) = 3/7$,
$u_{i_\ell}(x) = u_{i_\ell}(y) = 2/7$. 
\item
We set $u_{i_c}(c) = 1$.
\item
All other utilities are set to $0$.
\end{itemize}

We will now argue that there exists an independent set of size $k$ in the graph $(W,L)$ if and only if this 
instance of CFD admits a complete envy-free valid allocation.

Suppose there exists an independent set $X \subseteq W$ of size $k$. We construct an allocation $\pi$ 
as follows:
\begin{itemize}
\setlength\itemsep{1pt}
\item player $i_c$ receives $X\cup\{c\}$;
\item for $w \in W \setminus X$, player $i_w$ receives $w$;
\item for $w\in X$, player $i_w$ receives one item in $D$;
\item for $\ell\in L$, player $i_\ell$ receives $\ell$.
\end{itemize}
Clearly, $\pi$ is a complete valid allocation. It remains to show that $\pi$ is envy-free. 
First, player $i_c$ does not envy any other player since she receives all her positive-utility items.
Vertex players $\{i_w: w\in W\}$ receive utility $1/(k+1)$ in $\pi$; 
they could only envy someone who 
has multiple dummies, but no one does.
Edge players $\{i_\ell: \ell\in L\}$ receive utility $3/7$ in $\pi$; 
the only way an edge player $i_\ell$ could envy another player is if that 
player got both items corresponding to endpoints of $\ell$. 
But the only player who receives more than one vertex item is player $i_c$ whose items correspond to an independent set. 
So no player envies anyone, and $\pi$ is envy-free.

Conversely, suppose that there is a complete envy-free valid allocation $\pi$. 
By completeness, $\pi$ allocates the central piece $c$ to some player. 
If $i_c$ does not receive $c$ then she would envy the player 
who receives it; so $c \in \pi(i_c)$. 
Thus, every other player receives at most one item.
Since $\pi$ is complete, this means that $i_c$ gets at least $k$ leaf items.
Further, if $i_\ell$ does not receive $\ell$, 
she would envy the player who receives it,
so $\pi(i_\ell) = \{\ell\}$. Next, consider the bundle of player $i_c$.
If it contains more than one dummy item, vertex players would envy $i_c$.
Thus, it contains at least one item $w\in W$. If $\pi(i_c)$ also contains
a dummy item, $i_w$ would envy $i_c$, so $\pi(i_c)$ consists of $c$
and $k$ vertex items. Now, if there is an edge 
$\ell=(x, y)$ such that $x, y\in \pi(i_c)$, then
player $i_\ell$ envies $i_c$. Hence, $\pi(i_c)\setminus\{c\}$
forms an independent set of size $k$ in $(W, L)$.
\end{proof}



We also obtain a hardness result for paths; the proof 
is similar to that of Theorem~\ref{thm:prop-path}.

\begin{theorem}\label{t_NPC3}
The problem {\sc Complete-EF-CFD} is {\em NP}-complete even if $G$ is a path.
\end{theorem}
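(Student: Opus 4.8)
The plan is to reduce from {\sc X3C} in a manner closely parallel to the proof of Theorem~\ref{thm:prop-path}, but with the construction augmented so that envy-freeness, rather than mere proportionality, forces the combinatorial structure we want. As in Theorem~\ref{thm:prop-path}, I would take an instance $J=(X,\calT)$ of {\sc X3C} with $X=\{x_1,\dots,x_{3s}\}$ and $\calT=\{T_1,\dots,T_r\}$, each $T$ consisting of three elements $x_T^1,x_T^2,x_T^3$, and again arrange a path whose vertices consist of the triples of ``small'' set-vertices $v_T^1,v_T^2,v_T^3$ laid out consecutively for each $T$, followed by a block of $s$ ``big'' vertices $b_1,\dots,b_s$, and a terminal dummy vertex $w$. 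The players are one set-player $i_T$ per $T\in\calT$, one element-player $i_x$ per $x\in X$, and a dummy player $d$ who values only $w$. The high-level intention is identical: in any good allocation, $d$ must take $w$, the big vertices soak up exactly those set-players whose triples are broken apart so as to free up $s$ triple-intervals, and the $s$ freed triples must form an exact cover.

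Next I would choose utilities so that the envy-free and complete requirements enforce this structure. The main difference from the proportionality proof is that, since envy-freeness is an interpersonal comparison and completeness forbids discarding items, I must make sure (i) every vertex is allocated, and (ii) no player can be given so little that she envies someone else's connected piece. I would assign each element-player $i_x$ positive value only on the small vertices $v_T^k$ with $x\in T$ (so $i_x$ is forced onto a small vertex representing a set containing $x$, exactly as before), and calibrate $i_x$'s value for $w$ so that she does not envy $d$'s singleton $\{w\}$. For the set-players $i_T$, I would give value to the triple $v_T^1,v_T^2,v_T^3$, to each big vertex, and to $w$, tuned so that $i_T$ is content either with a whole triple-interval or with a single big vertex, and so that, critically, $i_T$ does \emph{not} envy an element-player holding a single small vertex from $i_T$'s own triple. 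The dummy player $d$ values only $w$, forcing $w\in\pi(d)$ in any envy-free allocation exactly as in the proportionality argument.

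With the construction in hand, the two directions of the correctness proof run as follows. For the forward direction, given an exact cover $\calT'$ of size $s$, I would match the $s$ covering sets to the $s$ big vertices (each such $i_T$ receives its big vertex), assign each non-covering $i_T$ its full triple-interval, place each element-player $i_x$ on the small vertex $v_T^k$ with $x=x_T^k$ and $T\in\calT'$, and give $w$ to $d$; I then verify completeness (every small vertex, big vertex, and $w$ is used) and check envy-freeness player-class by player-class using the calibrated values. For the converse, I would argue that in any complete envy-free allocation $d$ gets $w$, each $i_T$ receiving a broken triple is instead compensated by a big vertex while each $i_x$ lands on a small vertex of some set, and a counting argument ($|B|=s$) shows that exactly $s$ triples are freed and that these triples exactly cover $X$.

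The main obstacle will be the utility calibration for the set-players: I must simultaneously prevent $i_T$ from envying an element-player who occupies one of $v_T^1,v_T^2,v_T^3$ (so that a ``broken'' triple can legitimately be compensated by a big vertex without creating envy), and prevent element-players and set-players from envying each other across the block of big vertices and the dummy vertex $w$, all while keeping the allocation complete. Getting a single consistent set of rational weights that makes every one of these pairwise comparisons tight in the right direction — and that, in the reverse direction, rigidly forces the cover structure rather than merely permitting it — is the delicate part; the path-layout and counting arguments themselves are then routine adaptations of Theorem~\ref{thm:prop-path}.
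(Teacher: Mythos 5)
Your high-level plan coincides with the paper's: reuse the path layout of the X3C reduction from Theorem~\ref{thm:prop-path} and re-weight the utilities so that completeness plus envy-freeness, rather than proportionality, forces the exact-cover structure. However, the entire content of such a proof is the calibration you defer to the final paragraph, and the partial specification you do give would break. Concretely: if, as in Theorem~\ref{thm:prop-path}, each element-player $i_x$ assigns equal positive value to \emph{all} small vertices $v_T^k$ of every set $T$ containing $x$, then in the intended allocation $i_x$ holds a single small vertex while the set-player of some non-cover set $T\ni x$ holds the intact triple $v_T^1,v_T^2,v_T^3$, which is worth three times as much to $i_x$ --- so $i_x$ envies her. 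Likewise, a single dummy vertex $w$ carrying the bulk of $i_x$'s unit mass (as in Theorem~\ref{thm:prop-path}, where $u_{i_x}(w)=(n-3p_x)/n$) makes $i_x$ envy the dummy player's singleton $\{w\}$; setting $u_{i_x}(w)$ small enough to avoid this forces you to redistribute that mass somewhere, and you do not say how to do so consistently with normalization and with all the other pairwise comparisons.

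The paper resolves both issues with two specific changes that your sketch does not contain. First, each element-player values only the \emph{position-specific} vertex $T_i^k$ with $x_j=x_{ik}$ (one vertex per containing set, not three), so an intact triple is worth exactly as much to her as her own single vertex and no envy arises; this exploits the fact that an element cannot occur twice in a three-element set. Second, the single dummy vertex is replaced by $s+1$ dummy vertices and $s+1$ dummy players, and each element-player $y_j$ spreads value $3s$ over $s+1-p_j$ of the dummy vertices, so that every player's total is the same constant $3s(s+1)$, every singleton piece is worth at most $3s$ to everyone, and one can argue that in any complete envy-free allocation every player receives value exactly $3s$ --- which is what rigidifies the converse direction. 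Without these (or equivalent) numbers your argument is a plan rather than a proof, and implemented as hinted it would not go through.
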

\begin{proof}
We shall show how to modify the polynomial transformation from {\sc X3C} 
provided in the proof of Theorem~\ref{thm:prop-path} to get the result.
The graph $G$ and the set of players $N$ are the same as in that proof.

For each $y_j$, let
$\mathcal{Z}_j$ be an arbitrary set of $s+1-p_j$ vertices among the
dummy vertices. Let us denote $U=3s(s+1)$ and let the utilities be:
\[
u_{y_j}(v) = \left\{
  \begin{array}{ll} 
    3s  &\ \qquad \text{\ if \ } v=T_i^k \text{\ and\ }  y_j=x_{ik}\\
    3s  &\ \qquad \text{\ if \ } v \in \mathcal{Z}_j\\ 
    0 &\ \qquad \text{\ otherwise}
  \end{array}
\right.
\]
\[
u_{t_i}(v) = \left\{
  \begin{array}{ll}
    s & \text{\ if \ } v=T_i^k \text{\ for\ } k=1,2,3\\ 
    3s &  \text{\ if \ } v=S_j \text{\ for\ } j=1,2,\dots, s\\
    0 &\text{\ otherwise}
  \end{array}
\right.
\]
\[
u_{z_k}(v)=\left\{
  \begin{array}{ll}
    3s &\quad \text{\ for \  } v = Z_1,Z_2,\dots,Z_{s+1}\\
    0 &\quad \text{\ otherwise}
  \end{array}
\right.
\]
It is easy to see that the total value of the whole cake is for each
player equal to $U$.　As  players $z_k$ assigns a nonzero value only
to vertices $Z_1,Z_2,\dots,Z_k$, they must be assigned to them in any complete envy-free and any proportional allocation
(otherwise if some other player $\alpha$ will get one of these vertices then one of $z_k$ players will  get a piece with utility 0 and will envy $\alpha$).

Observe further that in every proportional equitable allocation, each player $\alpha$ gets a utility of $3s$.  She must get at least this utility also in every envy-free allocation. Namely, if  $\alpha=y_j$ gets less than $3s$  then she will envy any dummy player receiving and if $\alpha = t_i$, then she will envy any
other player receiving an $S$-vertex.

The rest of the proof is now easy when we realize that each player gets a connected piece of value exactly $3s$.
Now, a player $y_j$ can get a connected piece of value at least $3s$
only if she is assigned to a small vertex $T_i^k$ corresponding to a
set that contains $x_j$ and player $t_i$ can get an interval of value
at least $3s$ only if she is assigned either to three vertices
$T_i^1, T_i^2,T_i^3$ (triple interval), or to one of the $S$-vertices.

Suppose that an exact cover $\calT'$ for $J$ exists.
We assign the intervals to players as follows:
\begin{itemize}\itemsep0pt
\item Each player $y_j$ is assigned to the small vertex $T_i^k$ if
  $x_j=x_{ik}$ and $T_i\in \calT'$.
\item Players $t_i$ corresponding to sets in $\calT'$ are assigned to
  vertices $S_k$, $k=1,2,\dots, s$ in an arbitrary order.
\item $t_i$ for $T_i\notin \calT'$ is assigned to the three small
  vertices $T_i^1,T_i^2,T_i^3$.
\end{itemize}  
It is easy to see that each player is assigned one connected piece and
the value of her piece is exactly $3s$. Since each player $y_j$
receives exactly one vertex, this share cannot be of value strictly
more than $3s$ for any other player (because no player values a single
vertex more than $3s$). Hence none can envy any player $y_j$. By the
same argument, none can envy any player $t_i$ receiving a vertex
$S_k$. The interval $T_i^1,T_i^2,T_i^3$ has a value $3s$ for $t_i$ and
$0$ for the other players $t_{i'}$. Hence the players $t_i$ cannot
envy each other. Now, interval $T_i^1,T_i^2,T_i^3$ has a value strictly
more than $3s$ for a player $y_j$ only if element $x_j$ appears more
than once in $T_i$, which could not happen by definition. Hence the
allocation is envy-free.

Conversely, suppose that a complete envy-free valid allocation in $I$
exists. As we have seen earlier, each player should receive at least
$3s$, which is possible only if each player $t_i$ either receives one
$S$-vertex or the triple interval $T_i^1,T_i^2,T_i^3$. As the number
of $S$-vertices is only $s$, the number of $t$-players assigned to
triple intervals is $r-s$. So the number of $T$-vertices available for
$y$-players is $3s$ and they constitute an exact cover for $J$.
\end{proof}
On the positive side, just as for {\sc Prop-CFD}, the problem {\sc Complete-EF-CFD} is also in XP with respect to 
the number of player types $p$, as long as $G$ is a path.

\begin{theorem}\label{thm:XP:path:EF}
{\sc Complete-EF-CFD} is in {\em XP} with respect to the number of player types $p$
if $G$ is a path.
\end{theorem}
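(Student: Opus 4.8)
\noindent\emph{Proof plan.} The plan is to reduce envy-freeness, which is a global constraint, to a collection of per-type threshold constraints, after which a dynamic program analogous to the one in Theorem~\ref{thm:XP:path} applies. Write $V=\{v_1,\dots,v_m\}$ and let $u_t$ denote the common utility function of all players of type $t$, with $n_t$ the number of such players. Since $G$ is a path, any complete valid allocation partitions $V$ into contiguous intervals, one per player. I would first record two structural observations. In any complete envy-free allocation $\pi$, no player receives an empty bundle: if some player $a$ of type $t$ had $\pi(a)=\emptyset$, then envy-freeness forces $u_t(\pi(j))\le u_t(\emptyset)=0$ for every $j$, and summing over the complete allocation gives $u_t(V)=0$, contradicting $u_t(V)=1$. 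Hence $\pi$ cuts the path into exactly $n$ nonempty intervals. Second, any two players $a,b$ of the same type $t$ must receive bundles of equal value, since $u_t(\pi(a))\ge u_t(\pi(b))$ and $u_t(\pi(b))\ge u_t(\pi(a))$; call this common value $M_t$. Envy-freeness then says precisely that $M_t=\max_j u_t(\pi(j))$ for each type $t$: every type-$t$ player receives a piece of value exactly $M_t$, and no piece has $u_t$-value exceeding $M_t$.

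Building on this, I would guess the vector of thresholds $(M_1,\dots,M_p)$. Since each $M_t$ is the $u_t$-value of some interval, it ranges over a set of size $O(m^2)$, giving $O(m^{2p})$ candidate threshold vectors. For a fixed guess, envy-freeness becomes local: a partition of $V$ into $n$ intervals together with an assignment of types to intervals is complete and envy-free if and only if (i)~every interval $J$ satisfies $u_{t'}(J)\le M_{t'}$ for all $t'\in[p]$, and (ii)~each interval assigned to type $t$ satisfies $u_t(J)=M_t$. I would then run a dynamic program mirroring Theorem~\ref{thm:XP:path}: for $i\in\{0,\dots,m\}$ and counts $(j_1,\dots,j_p)$ with $0\le j_t\le n_t$, set $B_i[j_1,\dots,j_p]=1$ if and only if $V_i$ can be cut into intervals assigned to $j_t$ players of each type $t$ so that (i) and (ii) hold on these intervals. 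The transition sets $B_i[j_1,\dots,j_p]=1$ if and only if there is a cut point $s<i$ and a type $t$ with $j_t\ge 1$ such that $B_s[j_1,\dots,j_t-1,\dots,j_p]=1$, the interval $\{v_{s+1},\dots,v_i\}$ has $u_t$-value exactly $M_t$, and $u_{t'}$-value at most $M_{t'}$ for every $t'$. A complete envy-free allocation consistent with this guess exists if and only if $B_m[n_1,\dots,n_p]=1$.

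For correctness, the two structural observations guarantee that every complete envy-free allocation is captured by some guess, namely $M_t$ equal to the common value of type-$t$ players; conversely, conditions (i)--(ii) guarantee that anything the dynamic program accepts is genuinely envy-free and complete, since each type-$t$ player receives $M_t$, which by (i) is the maximum piece value, so no player envies another. Prefix sums make each interval value computable in $O(1)$ time; there are $O(m\cdot n^p)$ cells, $O(mp)$ transitions each, and $O(m^{2p})$ guesses, so the total running time is polynomial in $m$ for fixed $p$, establishing XP membership. The main conceptual obstacle is the first step: recognizing that the apparently global envy-freeness requirement collapses, on a path with few types, to a bounded number of per-type threshold values that can be guessed. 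Once these thresholds are fixed the remaining work is essentially the dynamic program already developed for proportionality, so I expect the details to be routine.
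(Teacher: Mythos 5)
Your proposal is correct and follows essentially the same route as the paper's proof sketch: guess, for each type, the common value $M_t$ that type-$t$ players receive (an interval value, so $O(m^2)$ choices per type and $O(m^{2p})$ guesses), then run the dynamic program of Theorem~\ref{thm:XP:path} modified so that a piece given to type $t$ has $u_t$-value exactly $M_t$ and $u_{t'}$-value at most $M_{t'}$ for all other types. Your added observations (no empty bundles in a complete envy-free allocation, hence exactly $n$ nonempty intervals) are correct details that the paper's sketch leaves implicit.
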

\begin{proof}[Proof sketch]
Note that for an allocation to be envy-free, all pieces assigned to players of a given
type should have the same value to players of that type. When $G$ is a path, there are only $\smash{\binom{m+1}{2}} \le m^2$ different connected bundles. Hence there are at most $m^2$ many possibilities for the utility that a player of a given type can obtain in a valid allocation. 

Our algorithm works as follows. For each player type, 
it guesses the utility that players of that type assign to their pieces (this guessing
can be implemented by going over all possibilities, as there are at most $\smash{\left(m^2\right)^p}$ of them). It then proceeds
similarly to the dynamic programming algorithm in the proof of Theorem~\ref{thm:XP:path};
the only difference is that, when creating a piece of the form $\{v_{s+1}, \dots, v_i\}$ 
for a player of a given type, 
it checks that the utility of that player type for this piece is what it guessed for that type, 
and that other players' utility for this piece is at most their guessed utility.
\end{proof}



\section{Maximin Share Guarantee}

\begin{figure*}[th]
	\begin{subfigure}[t]{0.33\textwidth}
		\centering
		\begin{tikzpicture}[scale=0.8]
		\tikzstyle{every node}=[fill=black!80,text=white,circle,inner sep=3pt] 
		\node(0) at (0,0) {$r$};
		\node[fill=black!65](1) at (-1,-1) {};
		\node[fill=black!65](2) at (1,-1) {};
		\node[fill=black!40](3) at (-1.5,-2) {};
		\node[fill=black!40](4) at (-0.5,-2) {};
		\node[fill=black!40](5) at (0.6,-2) {};
		\node[fill=black!40](6) at (1.45,-2) {};
		\node[fill=black!25](7) at (-1.7,-3) {};
		\node[fill=black!25](8) at (-1.3,-3) {};
		\node[fill=black!25](9) at (-0.7,-3) {};
		\node[fill=black!25](10) at (-0.3,-3) {};
		\node[fill=black!25](11) at (0.3,-3) {};
		\node[fill=black!25](12) at (1.3,-3) {};
		\node[fill=black!25](13) at (1.7,-3) {};
		
		\draw (0)--(1) (0)--(2); 
		\draw (1)--(3) (1)--(4); 
		\draw (2)--(5) (2)--(6); 
		\draw (3)--(7) (3)--(8); 
		\draw (4)--(9) (4)--(10); 
		\draw (5)--(11); 
		\draw (6)--(12) (6)--(13); 
		
		\draw [rounded corners=3mm]  (-0.15,-3.3)--(2.23,-3.3)--(1,-0.25)--cycle;
		
		\begin{scope}[xshift=-3]
		   \draw [rounded corners=0.2mm,fill=black!50] (0.6,-0.8)--(1.1,-0.1)--(0.3,-0.55)--cycle;
		   \draw [fill=black] (0.3,-0.55) -- (0.4,-0.65) -- (0.1,-0.85)-- (0,-0.75);
		\end{scope}
		\end{tikzpicture}
		\caption{The first player proposes a bundle.}
	\end{subfigure}%
	\begin{subfigure}[t]{0.33\textwidth}
		\centering
		\begin{tikzpicture}[scale=0.8]
		\tikzstyle{every node}=[fill=black!80,text=white,circle,inner sep=3pt] 
		\node(0) at (0,0) {$r$};
		\node(0) at (0,0) {$r$};
		\node[fill=black!65](1) at (-1,-1) {};
		\node[fill=black!65](2) at (1,-1) {};
		\node[fill=black!40](3) at (-1.5,-2) {};
		\node[fill=black!40](4) at (-0.5,-2) {};
		\node[fill=black!40](5) at (0.6,-2) {};
		\node[fill=black!40](6) at (1.45,-2) {};
		\node[fill=black!25](7) at (-1.7,-3) {};
		\node[fill=black!25](8) at (-1.3,-3) {};
		\node[fill=black!25](9) at (-0.7,-3) {};
		\node[fill=black!25](10) at (-0.3,-3) {};
		\node[fill=black!25](11) at (0.3,-3) {};
		\node[fill=black!25](12) at (1.3,-3) {};
		\node[fill=black!25](13) at (1.7,-3) {};
		
		\draw (0)--(1) (0)--(2); 
		\draw (1)--(3) (1)--(4); 
		\draw (2)--(5) (2)--(6); 
		\draw (3)--(7) (3)--(8); 
		\draw (4)--(9) (4)--(10); 
		\draw (5)--(11); 
		\draw (6)--(12) (6)--(13); 
		\draw [rounded corners=3mm,dashed]  (-0.15,-3.3)--(2.23,-3.3)--(1,-0.25)--cycle;
		
		\draw [rounded corners=0.2mm,fill=black!50] (1.1,-1.7)--(1.6,-1)--(0.8,-1.45)--cycle;
		\draw [fill=black] (0.8,-1.45) -- (0.9,-1.55) -- (0.6,-1.75)-- (0.5,-1.65);
		\end{tikzpicture}
		\caption{Other players may diminish the bundle.}
	\end{subfigure}%
	\begin{subfigure}[t]{0.33\textwidth}
		\centering\qquad
		\begin{tikzpicture}[scale=0.8]
		\tikzstyle{every node}=[inner sep=3pt] 
		\node[fill=black!80,text=white,circle,inner sep=3pt] (0) at (0,0) {$r$};
		\node[fill=black!65,circle](1) at (-1,-1) {};
		\node[fill=black!65,circle](2) at (1,-1) {};
		\node[fill=black!40,circle](3) at (-1.5,-2) {};
		\node[fill=black!40,circle](4) at (-0.5,-2) {};
		\node[fill=black!40,circle](5) at (0.55,-2) {};
		\node[fill=black!40,circle](6) at (1.5,-2) {};
		\node[fill=black!25,circle](7) at (-1.7,-3) {};
		\node[fill=black!25,circle](8) at (-1.3,-3) {};
		\node[fill=black!25,circle](9) at (-0.7,-3) {};
		\node[fill=black!25,circle](10) at (-0.3,-3) {};
		\node[fill=black!25,circle](11) at (0.3,-3) {};
		\node[fill=black!25,circle](12) at (1.3,-3) {};
		\node[fill=black!25,circle](13) at (1.7,-3) {};
		
		\draw (0)--(1) (0)--(2); 
		\draw (1)--(3) (1)--(4); 
		\draw (2)--(5);
		\draw (3)--(7) (3)--(8); 
		\draw (4)--(9) (4)--(10); 
		\draw (5)--(11); 
		\draw (6)--(12) (6)--(13); 
		\draw [rounded corners=3mm]  (0.8,-3.3)--(2.2,-3.3)--(1.5,-1.25)--cycle;
		
		\node at (2.5,-1.5) {$\cdots$};
		\end{tikzpicture}
		\caption{The last-diminisher receives the bundle.}
	\end{subfigure}%
	\caption{A discrete version of the last-diminisher method}
	\label{fig2}
\end{figure*}
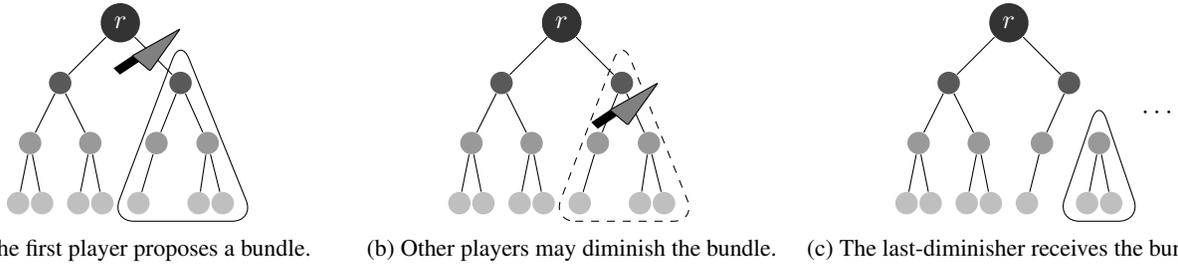

After \citet{budish} introduced the notion of an MMS allocation, it was open for some time whether every 
allocation problem (without connectivity constraints) admitted such an allocation. \citet{Procaccia14} 
found a counterexample. A family of more compact examples was found by \citet{KurokawaPW16}; these 
examples implicitly use an underlying grid graph; hence, for grid graphs, existence of MMS allocations is 
not guaranteed. Here, we show that for \emph{trees} an MMS allocation always exists. Our argument is 
constructive, and our algorithm corresponds to a discrete version of the last-diminisher method, which 
ensures proportionality while cutting a divisible resource [see, e.g., \citealp{Brams96}].
This method proceeds by letting one player identify a bundle of items. Every other player, in order, then 
has the option to \emph{diminish} this bundle by removing some of the items from it. The last player who 
chose to diminish is allocated the (diminished) bundle. The same procedure is then applied to divide the 
rest of the cake among the remaining $n-1$ players.

We first describe an efficient procedure that guarantees each player 
a pre-specified level of utility.

\begin{prop}\label{thm:mms}
Let $I=(G, N, \mathcal{U})$ be an instance of CFD where $G$ is a tree and 
let $(q_i)_{i \in N}$ be an $n$-tuple of rational numbers. If $\MMS_i(I)\ge q_i$ for all $i \in N$, then there exists a valid allocation $\pi$ such that each player $i \in N$ receives the bundle of value at least $q_i$, i.e., $u_i(\pi(i))\ge q_i$. Moreover, one can compute such an allocation in polynomial time. 
\end{prop}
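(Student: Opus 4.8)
The plan is to prove this constructively, by induction on the number of players $n$, realising the last-diminisher idea on a rooted tree. First I would root $G$ at an arbitrary vertex $r$; for a vertex $v$ I write $D(v)$ for the subtree of descendants of $v$ (including $v$), and call $v$ \emph{happy} for player $i$ if $u_i(D(v)) \ge q_i$. The point of using rooted subtrees is that deleting $D(v)$ from a tree leaves a tree, so cutting off such a piece keeps us inside the class of instances we can recurse on. In one round I would cut off an inclusion-minimal happy subtree, give it to a player who is happy with it, and recurse on the remaining tree with one fewer player.

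Concretely, among all pairs $(v,i)$ with $v$ happy for $i$, I would pick one, $(v^*, j^*)$, for which $D(v^*)$ is inclusion-minimal, and set $P = D(v^*)$. Then $u_{j^*}(P) \ge q_{j^*}$, so the winner $j^*$ is satisfied, and minimality yields the key structural property: for every child $c$ of $v^*$ and every player $i$ we have $u_i(D(c)) < q_i$. I first need to check that $P \ne V$, i.e.\ $v^* \ne r$, so that something is left for the other $n-1$ players. Assuming $n \ge 2$, I would take any player's MMS partition of $V$ into $n$ connected pieces each worth $\ge q_i$; contracting the pieces gives a tree on $n \ge 2$ nodes, which has at least two leaves, so some pendant piece avoids $r$. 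Such a pendant piece is exactly $D(v)$ for some $v \ne r$ and is worth $\ge q_i$, exhibiting a happy proper subtree and hence $v^* \ne r$.

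The heart of the argument, and the step I expect to be the main obstacle, is the inductive invariant: after removing $P$ and $j^*$, every remaining player $i$ still satisfies $\MMS_i \ge q_i$ in the reduced instance on the tree $V \setminus P$ with $n-1$ players. To establish this I would start from a partition $Q_1, \dots, Q_n$ witnessing $\MMS_i(I) \ge q_i$ and argue, using the minimality property, that at most one $Q_j$ lies entirely inside $P$: a piece intersecting $P$ either contains $v^*$ or lies inside some child subtree $D(c)$, and the latter is impossible since it would have value at most $u_i(D(c)) < q_i$; as only one piece can contain $v^*$, at most one is wholly inside $P$. If none is, then only the piece $Q_a \ni v^*$ is truncated while the others (disjoint from $P$) survive unchanged and connected, giving $n$ connected pieces of $V \setminus P$; merging the connected remnant $Q_a \setminus P$ with an adjacent piece (whose value is already $\ge q_i$) leaves $n-1$ connected pieces each worth $\ge q_i$. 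If exactly one piece lies inside $P$, minimality forces $P$ to equal that piece, so the remaining $Q_j$ already form an $(n-1)$-partition of $V \setminus P$ into connected pieces each worth $\ge q_i$. Either way the invariant is preserved.

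With the invariant in hand the induction closes: the base case $n=1$ assigns all of $V$ to the single player (who values it $1 \ge q_1$), and in general assigning $P$ to $j^*$ and recursing produces a valid allocation with $u_i(\pi(i)) \ge q_i$ for all $i$, since $P$ and the connected pieces of $V \setminus P$ are pairwise disjoint and connected in $G$. For the running time I would note that each round only requires computing $u_i(D(v))$ for all $i$ and $v$ (one traversal per player) and selecting a minimal happy subtree, which is $O(nm)$; over the $n$ rounds this is polynomial, and the procedure never needs the values $\MMS_i(I)$ themselves, only the thresholds $q_i$.
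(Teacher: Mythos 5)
Your proposal is correct and follows essentially the same route as the paper's proof: a discrete last-diminisher that cuts off an inclusion-minimal rooted subtree worth at least $q_i$ to some player (so that no child subtree is acceptable to anyone), together with the same key invariant that every remaining player's maximin share in the residual tree stays at least $q_i$, proved by the same surgery on a witness partition (the cut subtree lies inside a single part, whose connected remnant is merged with an adjacent part). The only cosmetic difference is that you make explicit, via the contracted-partition/pendant-piece argument, that the selected subtree is proper, a point the paper leaves implicit in its failure-check and lemma.
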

\begin{proof}
We will give an informal description of our recursive algorithm $\calA$ (Algorithm~\ref{alg:mms2}), 
followed by pseudocode. For each $X \subseteq V$, we let $G\setminus X$ denote the subgraph induced 
by $V\setminus X$; also, we denote the restriction of $u_i$ to $X$ by $u_i|_X$.

The algorithm first checks whether its input graph $G'$ 
has a value of at least $q_i$ for each player $i\in N'$; if this is not the case, it fails.
Then, if there is only one player, the algorithm simply returns the allocation that assigns all
items to that player. When there are at least two players, $\calA$ turns the graph into a rooted tree by choosing an arbitrary node as its root; denote by $D(v)$ the set of descendants of a vertex $v$ in this rooted tree. 
Then each player $i$ finds a vertex $v_i$ such that his value for
$D(v_i)$ is at least $q_i$, but for each child $w$ of $v$ his value
for $D(w)$ is less than $q_i$. The algorithm then allocates $D(v_i)$ to the {\em last-diminisher} $i$ 
whose vertex $v_i$ has minimal height (such a pair $(i, v_i)$ can be found by starting at the root of the tree and moving downwards). 
The player $i$ exits with the bundle $D(v_i)$, 
and the same algorithm $\calA$ is called on the remaining instance (see Fig.~\ref{fig2}).
\begin{algorithm}
\SetKwInOut{Input}{input} \SetKwInOut{Output}{output}
\SetKw{And}{and}
\SetKw{None}{None}
\caption{$\calA(I',(q_i)_{i \in N'})$}\label{alg:mms2}
\Input{$I'=(G', N', \calU')$ and $(q_i)_{i \in N'}$ where $G'$ is a subtree of $G$, $N'$ is a subset of $N$, and $u'_i=u_i|_{V'}$ for all $i\in N'$
}
\Output{A valid allocation $\pi$ such that $u_i(\pi(i)) \ge q_i$ for all $i \in N'$
}
	\If{$u'_i(V') < q_i$ for some $i \in N'$}{\Return{fail}}
	\ElseIf{$|N'|=1$}{
	\Return{$\pi$ where $\pi(i)=V'$ for $\{i\}=N'$}\;
	}
	\Else{
	Turn $G'$ into a rooted tree\;
	Find $i \in N'$ and $v_i \in V'$ such that $u'_{i}(D(v_i )) \geq q_i$, but $u'_{j}(D(w)) < q_j$ for each child $w$ of $v_i$ and each player $j \in N'$\;
	Set $I''=(G'\setminus D(v_i), N'\setminus\{i\},\calU'')$ where $\calU''$ is given by $u''_j=u'_j|_{V'\setminus D(v_i)}$ for all $j\in N'\setminus\{i\}$\;
	\If{$\calA(I'',(q_j)_{j \in N'\setminus \{i\}})$ does not fail}{
	Set $\pi^{\prime}\leftarrow \calA(I'',(q_j)_{j \in N' \setminus \{i\}})$\;
	Set $\pi(i)=D(v_i)$ and $\pi(j)=\pi^{\prime}(j)$ for each $j \in N'\setminus \{i\}$\;
	\Return{$\pi$}\;
	}
	}
\end{algorithm}

It is immediate that $\calA$ runs in polynomial time.
Let $I_n, \dots, I_1$ be the sequence of instances constructed by $\calA$ when called on $I$ and $(q_i)_{i \in N}$, 
where $I_k=(G_k, N_k, \calU_k)$ and $|N_k|=k$ (i.e., $I=I_n$). If $\calA$ does not fail
on any of these instances, then $\calA(I,(q_i)_{i \in N})$ returns a desired allocation: 
each agent is allocated a bundle that she values at least 
as highly as her given value $q_i$. 
We need to prove that none of the recursive calls fails.
To this end, we will prove the following lemma. 

\begin{lemma}\label{lem:mms-recursion}
$\MMS_j(I_k)\ge q_j$ for all $k\in [n]$ and all $j\in N_k$. 
\end{lemma}
\begin{proof}
The proof proceeds by backwards induction on $k$. For $k=n$
the statement of the lemma is true.
Suppose that the claim is true for some $k>1$; we will prove it for $k-1$.
Consider the player $i\in N_k\setminus N_{k-1}$, and let $D(v_i)$
be the bundle allocated to this player. 
For each player $j\in N_{k-1}=N_k\setminus\{i\}$ by the inductive hypothesis
we have $\MMS_j(I_k)\ge q_j$.
Consider a partition $\calP=(P_1, \dots, P_k)$ witnessing this;
$u_j(P_\ell)\ge q_j$  for each $\ell\in[k]$.
Assume without loss of generality that $v_i\in P_1$.
Then $D(v_i)$ is fully contained in $P_1$: if there is a vertex
$w$ in $D(v_i)\setminus P_1$, then the part of $\calP$
that contains $w$ is fully contained in a subtree rooted at a child of $v_i$, 
and hence the value of that part is strictly less than $q_j$, 
a contradiction. 

Now, if $P_1\setminus D(v_i)$ is not empty,
then it is a subtree of $G$, and there is another part $P\in\calP$
that is adjacent to $P_1\setminus D(v_i)$ in $G$. 
Therefore, $\calP'=(\calP\setminus \{P_1\})\cup \{P\cup(P_1\setminus D(v_i))\}$
is a partition of $G_{k-1}$ into $k-1$ connected components.
By construction, $u_j(P')\ge q_j$ for each $P'\in\calP'$, 
which proves that $\MMS_j(I_{k-1})\ge q_j$. 
\end{proof}
Now, consider what happens when $\calA$ is called on $I_k$ and $(q_i)_{i \in N_k}$ for some $k\in[n]$.
Let $G_k=(V_k, E_k)$. We have $u_i(V_k)\ge \MMS_i(I_k)\ge q_i$
for all $i\in N_k$, which implies that the algorithm does not fail.
This completes the proof.
\end{proof}

Proposition~\ref{thm:mms} relies on being given $(q_i)_{i\in N}$ as its input,
so we still need to show that MMS values on trees can be computed efficiently. 
It turns out that this can be accomplished by the same recursive algorithm.
We note that for the general problem (without graph constraints, or equivalently, 
on complete graphs), computing MMS values is NP-hard, 
though they can be well-approximated \citep{Woeginger97}.

\begin{lemma}\label{lem:mms1}
For an instance $I=(G, N, \mathcal{U})$ 
of CFD
where $G$ is a tree, and a player $i\in N$,
we can compute $\MMS_i(I)$ in polynomial time.
\end{lemma}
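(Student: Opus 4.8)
The plan is to compute $\MMS_i(I)$ by using the last-diminisher procedure $\calA$ from Proposition~\ref{thm:mms} as a feasibility oracle and locating the exact value by binary search. The starting observation is that $\MMS_i(I)$ depends only on player $i$'s own valuation $u_i$: it is the largest $q$ such that the tree $G$ can be partitioned into $n$ connected pieces each worth at least $q$ to player $i$. To bring Proposition~\ref{thm:mms} to bear, I would form the auxiliary instance $I^\ast=(G,N,\calU^\ast)$ in which every player uses $u_i$, i.e.\ $u^\ast_j=u_i$ for all $j\in N$. Since $\Pi_n$ depends only on $G$ and $n$, we have $\MMS_j(I^\ast)=\MMS_i(I)$ for every $j$, so all players of $I^\ast$ share the same MMS value.

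Next I would build the feasibility test: given a rational threshold $q$, decide whether $\MMS_i(I)\ge q$. I claim that running $\calA(I^\ast,(q,\dots,q))$ answers this exactly. For one direction, if $\MMS_i(I)\ge q$ then $\MMS_j(I^\ast)\ge q=q_j$ for all $j$, and Proposition~\ref{thm:mms} guarantees that $\calA$ does not fail. For the converse, if $\calA$ succeeds it returns a valid allocation $\pi$ with $u_i(\pi(j))=u^\ast_j(\pi(j))\ge q$ for all $j$; by the structure of $\calA$ (in the base case the last player receives all remaining items) the bundles $\pi(1),\dots,\pi(n)$ partition $V$ into $n$ connected pieces, each of value at least $q$ to $i$, so $\MMS_i(I)\ge q$. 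Hence $\calA(I^\ast,(q,\dots,q))$ succeeds if and only if $\MMS_i(I)\ge q$, and each such call runs in polynomial time.

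Finally I would pin down the exact value by binary search over this oracle. Writing every $u_i(v)$ over a common denominator $D$, every connected piece has value equal to an integer multiple of $1/D$; in particular $\MMS_i(I)=k^\ast/D$ for some integer $0\le k^\ast\le D$. I would binary search for the largest integer $k$ with $\MMS_i(I)\ge k/D$, testing each candidate threshold $k/D$ with the oracle; this largest $k$ equals $k^\ast$ and thus yields $\MMS_i(I)$ exactly. The search uses $O(\log D)$ oracle calls, and since $\log D$ is polynomial in the binary input length, the whole procedure is polynomial.

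The main obstacle is this last step: ensuring the binary search is exact rather than merely approximate. This rests on two facts: (i) the oracle decides $\MMS_i(I)\ge q$ exactly, and (ii) $\MMS_i(I)$ lies on the discrete grid $\{k/D\}$, so a search that resolves differences of $1/D$ recovers the true value. One should also double-check the ``only if'' direction of the oracle, namely that $\calA$'s output covers all of $V$; if one is cautious here, any leftover connected components can be merged into adjacent bundles without decreasing any piece's value, since utilities are non-negative, which still certifies the partition needed to conclude $\MMS_i(I)\ge q$.
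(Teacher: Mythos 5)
Your proposal matches the paper's proof essentially step for step: the paper likewise clears denominators (scaling $u_i$ to integer values so that the maximin share lies on a discrete grid bounded by $mL^{m+1}$), forms the auxiliary instance in which all $n$ players are copies of player $i$, uses the last-diminisher algorithm $\calA$ from Proposition~\ref{thm:mms} as an exact feasibility oracle for a threshold $q$, and binary-searches over the grid in $O(m\log L)$ oracle calls. Your added care about the `only if' direction of the oracle---observing that $\calA$'s output is in fact a complete allocation, or alternatively that leftover components could be merged into adjacent bundles---is sound and slightly more explicit than the paper's treatment.
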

\begin{proof}
Fix a player $i\in N$. If $u_i(v)$ is represented as $x_v/y_v$, 
where $x_v$ and $y_v$ are integers (recall that $u_i$
is assumed to take values in $\mathbb Q$), 
set $u'_i(v)=u_i(v)\prod_{v\in V} y_v$.
Let $\MMS'_i(I)$ be the maximin share of player $i$ with respect to these new utilities.
Then $\MMS'_i(I)$ is an integer between $0$ and $mL^{m+1}$, where
$L=\max_{v\in V}\max\{x_v, y_v\}$ and
$$
\MMS_i(I)=\frac{1}{\prod_{v\in V} y_v}\MMS'_i(I)
$$ 
We now explain how to compute $\MMS'_i(I)$ in time polynomial in $n$, $m$, and $\log L$, i.e., in time

Calculating $\MMS_i(I)$ is the same as maximizing the worst payoff for the instance $I''$ 
where all players are copies of player $i$. 
That is, $\MMS'_i(I)$ is equal to the maximum positive integer of $q$ $\le mL^{m+1}$ such that $I''=(G,N'',\calU'')$ where $N''$ is a set of $n$ copies of $i$ and 
$\calU''$ is given by $u''_j=u'_i$ for all $j\in N''$ admits a valid allocation $\pi$ such that $u''_j(\pi(j)) \ge q$ for each copy $j \in N''$. 
Further, if such an allocation exists, then $\MMS_{j}(I'')\ge q$ for all $j \in N''$, and hence the call $\calA(I'',(q,\ldots,q))$ of the recursive algorithm in the proof of Theorem \ref{thm:mms} does not fail, returning a desired partition of $G$. 
Conversely, if $\calA(I'',(q,\ldots,q))$ does not fail, $I''$ clearly admits a valid allocation where each copy of $i$ gets a piece of value at least $q$. 
Thus, the maximum value of such $q$ can be found by binary search, which would require $O((m+1)\log L)$ calls to the subroutine $\calA$  and the running time of the subroutine itself is polynomial in $m$ and $\log L$. 
\end{proof}

\noindent It now follows from Proposition~\ref{thm:mms} and Lemma~\ref{lem:mms1} 
that for trees an MMS allocation can be computed efficiently.

\begin{theorem}\label{the:mms2}
Every instance $I=(G, N, \calU)$ of CFD where $G$ 
is a tree admits an MMS allocation. 
Moreover, such an allocation can be computed in polynomial time.
\end{theorem}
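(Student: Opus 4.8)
The plan is to combine the two preceding results, which between them already do all the work. By definition, an MMS allocation is a valid allocation $\pi$ with $u_i(\pi(i)) \ge \MMS_i(I)$ for every player $i \in N$. The natural approach is therefore to first pin down the target utilities $q_i := \MMS_i(I)$ and then feed them into the guaranteeing procedure of Proposition~\ref{thm:mms}.

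First I would invoke Lemma~\ref{lem:mms1} once for each player $i \in N$ to compute the exact value $\MMS_i(I)$ in polynomial time, and set $q_i := \MMS_i(I)$. Since there are $n$ players and each such computation is polynomial, assembling the entire tuple $(q_i)_{i \in N}$ takes polynomial time. Next I would check that the hypothesis of Proposition~\ref{thm:mms} is met: it asks that $\MMS_i(I) \ge q_i$ for all $i$, which holds here trivially, indeed with equality, since $q_i = \MMS_i(I)$ by construction.

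Applying Proposition~\ref{thm:mms} to the tuple $(q_i)_{i \in N}$ then yields, in polynomial time, a valid allocation $\pi$ with $u_i(\pi(i)) \ge q_i = \MMS_i(I)$ for every $i \in N$. By the definition of an MMS allocation, $\pi$ is exactly such an allocation, which establishes both existence and efficient computability and completes the proof.

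I do not expect any genuine obstacle at this stage: the two real difficulties have already been resolved earlier. One is the correctness of the last-diminisher recursion, namely that none of the recursive calls to $\calA$ fails, which rests on the backwards-induction invariant $\MMS_j(I_k) \ge q_j$ of Lemma~\ref{lem:mms-recursion} inside Proposition~\ref{thm:mms}. The other is the polynomial-time computability of MMS values on trees via binary search over that same subroutine, handled in Lemma~\ref{lem:mms1}. The final theorem is simply the conjunction of these two facts, so the only thing to verify carefully is that instantiating $q_i = \MMS_i(I)$ makes the premise of Proposition~\ref{thm:mms} hold, which it does by definition.
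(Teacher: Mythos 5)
Your proposal is correct and matches the paper exactly: the paper derives Theorem~\ref{the:mms2} by the same two-step combination, computing each $\MMS_i(I)$ via Lemma~\ref{lem:mms1} and then invoking Proposition~\ref{thm:mms} with $q_i=\MMS_i(I)$. There is nothing to add.
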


The known examples of instances without MMS allocations are very intricate. Our graph-based setting 
allows for simpler constructions: our next example shows that 
an MMS allocation may not exist on a cycle of 8 vertices. We 
conjecture that this is the shortest cycle that admits such an example.
Our example is similar in spirit to an example for 2-additive utility functions
by \cite{BL-JAAMAS15}.

\begin{example}	\label{ex:mms}
Consider an instance $I=(G, N, \calU)$ of CFD 
where $G=(V, E)$ with $V=\{\, v_i \mid i=1,2,\ldots,8\,\}$, $E=\{\,\{v_i,v_{i+1}\} \mid 
i=1,2,\ldots,7\,\}\cup \{\{v_1,v_8\}\}$, $N=\{1,2,3,4\}$, and the utilities are given as follows.

\begin{center}
\begin{tabular}{ccccccccc} 
\toprule
 & $v_1$ & $v_2$ & $v_3$ & $v_4$ &  $v_5$ & $v_6$ &  $v_7$ &  $v_8$\\
\midrule
Players 1 \& 2 & 1 &4 & 4 & 1 & 3 & 2 & 2 & 3 \\
Players 3 \& 4 & 4 & 4 & 1 & 3 & 2 & 2 & 3 & 1 \\ 
\bottomrule
\end{tabular}
\end{center}
To normalize to $1$, each utility value above is divided by $20$.
Now, we have $\MMS_1(I)=\MMS_2(I)\ge 1/4$, as witnessed by the partition 
$P_1=\{\{v_1,v_2\},\{v_3,v_4\},\{v_5,v_6\},\{v_7,v_8\}\}$, which offers value $1/4$ for these players. 
Similarly, we have $\MMS_3(I)=\MMS_4(I) \ge 1/4$, as witnessed by 
the partition $P_2=\{\{v_2,v_3\},\{v_4,v_5\},\{v_6,v_7\},\{v_8,v_1\}\}$. These two partitions are illustrated below (note the cyclic shift):

\begin{center}
  \parbox{0.4\columnwidth}{
    \centering
    $P_1$:\quad
    \begin{tikzpicture}[baseline]
      \footnotesize
      \def \radius {0.8cm}
      \node[inner sep=1pt, draw, circle](node1) at ({0}:\radius) {$v_{3}$};
      \node[inner sep=1pt, draw, circle,fill=black](node2) at ({45}:\radius) {\color{white}$v_{2}$};
      \node[inner sep=1pt, draw, circle,fill=black](node3) at ({90}:\radius) {\color{white}$v_{1}$};
      \node[inner sep=1pt, draw, circle,fill=black!70](node4) at ({135}:\radius) {\color{white}$v_{8}$};
      \node[inner sep=1pt, draw, circle,fill=black!70](node5) at ({180}:\radius) {\color{white}$v_{7}$};
      \node[inner sep=1pt, draw, circle,fill=black!30](node6) at ({225}:\radius) {$v_{6}$};
      \node[inner sep=1pt, draw, circle,fill=black!30](node7) at ({270}:\radius) {$v_{5}$};
      \node[inner sep=1pt, draw, circle](node8) at ({315}:\radius) {$v_{4}$};
      
      \draw[-, >=latex,thick] (node1)--(node2);
      \draw[-, >=latex,thick] (node3)--(node2);
      \draw[-, >=latex,thick] (node3)--(node4);
      \draw[-, >=latex,thick] (node5)--(node4);
      \draw[-, >=latex,thick] (node5)--(node6);
      \draw[-, >=latex,thick] (node7)--(node6);
      \draw[-, >=latex,thick] (node7)--(node8);
      \draw[-, >=latex,thick] (node1)--(node8);
    \end{tikzpicture}
  }
  \parbox{0.4\columnwidth}{
    \centering
    $P_2$:\quad
    \begin{tikzpicture}[baseline]
      \footnotesize
      \def \radius {0.8cm}
      \node[inner sep=1pt,draw, circle](node1) at ({0}:\radius) {$v_{3}$};
      \node[inner sep=1pt,draw, circle](node2) at ({45}:\radius) {$v_{2}$};
      \node[inner sep=1pt,draw, circle,fill=black](node3) at ({90}:\radius) {\color{white}$v_{1}$};
      \node[inner sep=1pt,draw, circle,fill=black](node4) at ({135}:\radius) {\color{white}$v_{8}$};
      \node[inner sep=1pt,draw, circle,fill=black!70](node5) at ({180}:\radius) {\color{white}$v_{7}$};
      \node[inner sep=1pt,draw, circle,fill=black!70](node6) at ({225}:\radius) {\color{white}$v_{6}$};
      \node[inner sep=1pt,draw, circle,fill=black!30](node7) at ({270}:\radius) {$v_{5}$};
      \node[inner sep=1pt,draw, circle,fill=black!30](node8) at ({315}:\radius) {$v_{4}$};
      
      \draw[-, >=latex,thick] (node1)--(node2);
      \draw[-, >=latex,thick] (node3)--(node2);
      \draw[-, >=latex,thick] (node3)--(node4);
      \draw[-, >=latex,thick] (node5)--(node4);
      \draw[-, >=latex,thick] (node5)--(node6);
      \draw[-, >=latex,thick] (node7)--(node6);
      \draw[-, >=latex,thick] (node7)--(node8);
      \draw[-, >=latex,thick] (node1)--(node8);
    \end{tikzpicture}
  }
\end{center}

Now, suppose towards a contradiction that the instance $I$ admits an MMS allocation $\pi$. 
Then $\pi$ has to allocate at least two vertices to each player, as no player values 
any single item at $1/4$ or higher.
This means that $\pi$ partitions the cycle into either $P_1$ or 
$P_2$. Suppose first that $\pi$ cuts the graph into $P_1$. Then, 
there is only one connected piece in $P_1$ that players 3 and 4 value at $1/4$ or higher,   
namely, $\{v_1,v_2\}$, so at least one of these players is allocated a 
piece whose value is less than his maximin share. 
A similar argument holds when $\pi$ cuts the graph into $P_2$. 
Therefore, there is no MMS allocation.\qed
\end{example}

\section{Conclusions and Future Work}
There are several exciting directions for the study of
  connected fair division of indivisible goods. For the solution
  concepts we have studied in this paper, one can ask whether certain
  graph classes yield better approximations than the general case,
  both in terms of existence guarantees and complexity results. 
  In particular, it would be interesting to obtain a characterization of graphs for which an MMS allocation is guaranteed to exist. 
  There are also further solution concepts that we have not considered, most
  notably the maximum Nash welfare solution [see
  \citealp{Caragiannis16}], which could be studied in this context both
  from the axiomatic and the computational points of view. Another
  promising direction would be to extend the work to other
  preference representations, including ordinal preferences
  \citep{Aziz15}, or to chores instead of goods [e.g.,
  \citealp{Aziz17}]. Also, it would be interesting to obtain
  analogues of procedures such as sequential allocation and
  round-robin that respect the connectivity constraints and still
  produce desirable allocations. Finally, we may consider placing
  constraints on the `shapes' of players' pieces, e.g., by requiring
  that the size of each piece is large relative to its diameter;
  similar ideas have been recently explored by \cite{segal2015envy} in the context 
  of the land division problem (i.e., cutting a 2-dimensional cake).


\smallskip
\noindent
{\small
\textbf{Acknowledgements}\:
This work was partly supported by the European Research Council (ERC) under grant number 639945 (ACCORD). 
Ayumi Igarashi is supported by an Oxford Kobe scholarship. 
Katar\'ina Cechl\'arov\'a is supported by grant APVV-15-0091 
from the Slovak Research and Development Agency. Sylvain Bouveret is partly supported by the
project ANR-14-CE24-0007-01 CoCoRICo-CoDec.
The authors are grateful to the organizers of the Dagstuhl Seminar 16232 ``Fair Division''
and Budapest Workshop on Future Directions in Computational Social Choice 
(supported by COST Action IC 1205), which enabled this collaboration. 
}

\bibliographystyle{abbrvnat}

\end{document}